\newcommand{\ds}{\displaystyle}
\newcommand{\ts}{\textstyle}
\newcommand{\Rr}{{\mathbb R}}
\newcommand{\Cc}{{\mathbb C}}
\newcommand{\meanm}{\bar{\mathbf{m}}_{f}}
\newcommand{\meanc}{\bar{\mathbf{c}}_{g}}
\newcommand{\means}{\bar{\mathbf{s}}_{f,g}}
\newcommand{\meanr}{\bar{\mathbf{r}}_{f,g}^{(\theta)}}
\newcommand{\Mop}{\mathbf{M}_{f}}
\newcommand{\Cop}{\mathbf{C}_{g}}
\newcommand{\Sop}{\mathbf{S}_{f,g}}
\newcommand{\Rop}{\mathbf{R}_{f,g}^{(\theta)}}
\newcommand{\vvar}{\mathrm{var}}
\newcommand{\froot}{f^{1/2}}
\newcommand{\groot}{g^{1/2}}
\newcommand{\dist}{\mathrm{d}}
\newcommand{\distw}{\mathrm{d}_w}
\newcommand{\distwmax}{\mathrm{d}_w^{(\infty)}}
\theoremstyle{plain}
\newtheorem{theorem}{Theorem}[section]
\newtheorem{proposition}[theorem]{Proposition}
\newtheorem{lemma}[theorem]{Lemma}
\newtheorem{result}[theorem]{Property}
\theoremstyle{remark}
\newtheorem{remark}[theorem]{Remark}
\newtheorem{cexample}[theorem]{Counterexample}
\theoremstyle{definition}
\def\minus{%
  \setbox0=\hbox{-}%
  \vcenter{%
    \hrule width\wd0 height \the\fontdimen8\textfont3%
  }%
}
\begin{document}

\title{Shapes of Uncertainty in Spectral Graph Theory}

\author{Wolfgang Erb
\thanks{Universit{\`a} degli Studi di Padova, Dipartimento di Matematica ''Tullio Levi-Civita'', Padova, Italy, wolfgang.erb@lissajous.it.}
}

\markboth{Shapes of Uncertainty in Spectral Graph Theory}%
{Shapes of Uncertainty in Spectral Graph Theory}

\maketitle

\begin{abstract}
We present a flexible framework for uncertainty principles in spectral graph theory. In this framework, general filter functions modeling the spatial and spectral localization of a graph signal can be incorporated. It merges several existing uncertainty relations on graphs, among others the Landau-Pollak principle describing the joint admissibility region of two projection operators, and uncertainty relations based on spectral and spatial spreads. Using theoretical and computational aspects of the numerical range of matrices, we are able to characterize and illustrate the shapes of the uncertainty curves and to study the space-frequency localization of signals inside the admissibility regions. 
\end{abstract}

\begin{IEEEkeywords}
Uncertainty principle, spectral graph theory, numerical range of matrices, space-frequency analysis of signals on graphs
\end{IEEEkeywords}

\IEEEpeerreviewmaketitle

\section{Introduction}

\IEEEPARstart{U}{ncertainty} principles are important cornerstones in signal analysis. They describe inherent limitations of a signal to be localized simultaneously in a complementary pair of domains, usually referred to as space and frequency (or spectral) domains. Uncertainty relations can be formulated in a multitude of ways. Exemplarily, the first uncertainty principle discovered by Heisenberg \cite{Heisenberg1927} can be written in terms of a commutator relation of a position with a momentum operator. In other contexts, uncertainty relations are described in terms of inequalities, by the space-frequency support or the smoothness of functions, or in form of boundary curves for an uncertainty region. A survey on different description possibilities can be found in \cite{FollandSitaram1997}, a wider theory is given in \cite{HavinJoericke}. 

In signal processing on graphs, uncertainty relations play a crucial role as well. They are used to describe the limitations of space-frequency localization \cite{AgaskarLu2013,StankovicDakovicSejdic2019}, to study sampling properties on graphs \cite{perraudin2018,TBL2016}, or to analyse space-frequency atoms and wavelet decompositions \cite{AgaskarLu2013,shuman2012,shuman2016}. The discrete harmonic structure in spectral graph theory (as introduced in \cite{Chung}) allows to transfer many uncertainty relations developed in classical settings directly onto a graph structure. This has led to a quite fragmented zoo of available uncertainty relations. As the discrete geometry of a graph can vary from a very homogeneous, symmetric geometry to a very inhomogeneous structure, not every uncertainty principle is equally useful for every graph. To give an example, it is shown in \cite{BenedettoKoprowski2015,StankovicDakovicSejdic2019} that particular graphs, as for instance complete graphs, exhibit a harmonic structure in which the support theorem of Elad and Bruckstein \cite{EladBruckstein2002} provides only a very week uncertainty relation. 
For graphs it is therefore important to have a flexible framework of uncertainty principles at hand that can be adapted to the graph structure or to particular applications in graph signal processing. 

In spectral graph theory, the complementary pair of domains in which localization is measured is given by a space domain consisting of a discrete set of graph nodes and a spectral domain provided by the eigendecomposition of a graph Laplacian \cite{Chung}. The description of uncertainty principles on graphs relies on this graph-dependent spatial and spectral structure. The second key ingredient for the formulation of an uncertainty principle is a proper concept for the measurement of space and frequency localization. Suitable localization measures on graphs should be consistent with the given harmonic structure, but also be adjustable to prerequisites determined by applications.

The goal of this article is to offer a new more general perspective on uncertainty principles in spectral graph theory, able to incorporate a large number of different localization measures. These measures will be defined in terms of localization operators built upon a space and a frequency filter. For any given pair of filters we want to visualize and characterize the shapes of the corresponding uncertainty regions. For this, we combine  existing results on operator-based uncertainty principles in signal processing with computational methods developed for the numerical range of matrices. In this way, we get a powerful unified framework for the analysis, the computation and the illustration of uncertainty in spectral graph theory.

\noindent \textbf{Main contributions.} Our uncertainty framework on graphs is a synthesis and extension of several established results. It is built on the following ideas, compactly illustrated in Table 1.
\begin{itemize}
\item The theoretical starting point of this framework is the space-frequency analysis studied by Landau, Pollak and Slepian \cite{LandauPollak1961,LandauPollak1962,LandauWidom1980,Slepian1978,Slepian1983,SlepianPollak1961} for signals on the real line and its relative on graphs \cite{TBL2016,VanDeVille2016}. In Sections 
\ref{sec:spacefrequency} and \ref{sec:spacefrequencyLPS}, we will extend this theory from projection operators to general symmetric, positive-semidefinite operators. In this way, we are able to merge the Landau-Pollak uncertainty for projection operators with uncertainty relations based on spectral and spatial spreads on graphs as formulated in \cite{AgaskarLu2013,pasdeloup2015,pasdeloup2019uncertainty}. Our main new theoretical result in this part is the uncertainty estimate given in Theorem \ref{Theorem-uncertainty}. 
\item The second key technology implemented in this framework consists of theoretical and computational aspects of the numerical range of matrices \cite{Brickman,Hausdorff1919,HornJohnson1991,Johnson1978,Lenard1972,Toeplitz1918}. In Section \ref{sec:numericalrange}, we show that the uncertainty regions related to the space and frequency localization measures can be formulated and calculated with help of a convex numerical range. In this way, we obtain the efficient Algorithm 1 for the computation and visualization of the uncertainty curves and, in addition, the theoretical bounds in Theorems \ref{cor-UP-rotatedoperator} and \ref{Theorem-uncertaintyconvex} for the uncertainty regions. This simplifies the methods derived in \cite{AgaskarLu2013} for the calculation of the convex uncertainty curve.
\item In a Landau-Pollak-Slepian type space-frequency analysis, it is possible to formulate error estimates for the approximation of space-frequency localized signals \cite{erb2013,ErbMathias2015,LandauPollak1962}. In our general framework on graphs, these estimates will be derived in Section \ref{sec:errorestimates}.
\end{itemize}

Compared to existing uncertainty relations on graphs that are based on predefined single space-frequency filters, the filter pairs in our general framework are flexible. This gives the interesting opportunity to design filter functions for a graph-adapted space-frequency analysis. In particular, the generality of our framework has the following advantages:
\begin{itemize}
\item The Landau-Pollak-Slepian space-frequency analysis is based on a set-oriented localization while the spectral spreads defined in \cite{AgaskarLu2013} favor signals localized at the lower end of the graph spectrum. In our framework arbitrary distance functions on the graph or its spectrum can be implemented to measure different types of space-frequency localization. This allows us to design localization measures that contain additional spatial, directional or spectral information of the graph. We will provide some examples in Section \ref{sec:shapeexamples}.
\item By using projection filters in the Landau-Pollak-Slepian setting the spectrum of the space-frequency operator clusters at the values $0$ and $1$ \cite{LandauWidom1980,Slepian1983}. This leads to numerical instability when calculating the corresponding eigendecomposition directly. In Section \ref{sec:shapeexamples}, we will observe a similar clustering for graphs. By the usage of alternative filters, this clustering can be avoided. 
\item In graph convolutional neural networks, spatial and spectral filters are extracted in an optimization process \cite{Defferrard2016,Tran2018}. The generality of our framework allows to express uncertainty relations also for such learned filter functions. 
\end{itemize} 

\begin{table} \centering
\begin{tikzpicture}[
node distance = 2cm,
base/.style = {draw, minimum width = 3cm, minimum height = 1cm, align = center},
start/.style = {base, rectangle,rounded corners, fill = red!30},
input/.style = {base, rectangle,rounded corners, fill = blue!30},
]

\node (node0) [start] {Unified framework \\ of uncertainty principles \\ in spectral graph theory};
\node (node1) [input, above of=node0,yshift=1cm] 
{Landau-Pollak-Slepian theory \\ 
In signal processing: \cite{LandauPollak1961,Slepian1983,SlepianPollak1961} \\
On graphs: \cite{TBL2016}};
\node (node2) [input, left of=node0,xshift=-4.5cm] {Numerical range \\ of matrices \\ \cite{Brickman,Johnson1978,Lenard1972,Toeplitz1918}};
\node (node3) [input, right of=node0,xshift=5cm] {Uncertainty principle\\ based on graph spreads\\ and spectral spreads \cite{AgaskarLu2013}};

\draw [thick,arrows=-stealth] (node1) --node[anchor=east] {\scriptsize $\begin{array}{l}  \text{provides theory} \\ \text{and estimates} \\ \text{for boundary} \end{array}$} (node0);
\draw [thick,arrows=-stealth] (node1) --node[anchor=west] {\scriptsize $\begin{array}{l}  \text{provides concept for} \\ \text{space-frequency analysis} \end{array}$} (node0);
\draw [thick,arrows=-stealth] (node2) --node[anchor=north] { \scriptsize $\begin{array}{l}  \text{computational} \\ \text{methods} \end{array}$} (node0);
\draw [thick,arrows=-stealth] (node2) --node[anchor=south] { \scriptsize $\begin{array}{l} \text{results on} \\  \text{approximation} \\  \text{and boundary} \end{array}$} (node0);
\draw [thick,arrows=-stealth] (node3) --node[anchor=south] { \scriptsize $\begin{array}{l} \text{particular} \\  \text{filter functions} \end{array}$} (node0);
\draw [thick,arrows=-stealth] (node3) --node[anchor=north] { \scriptsize $\begin{array}{l} \text{description of} \\  \text{uncertainty} \\  \text{curve} \end{array}$} (node0);
\end{tikzpicture}
\caption{Main conceptual influences for the uncertainty framework studied in this work.}
\end{table}

\section{Background}
\subsection{Spectral graph theory}
The goal of this section is to give a broad overview on spectral graph theory and 
the notion of harmonic analysis on a graph $G$, essential for the formulation of space-frequency 
decompositions or uncertainty principles. A profound introduction to spectral graph theory can be found in \cite{Chung}. For an introduction to the Fourier transform and space-frequency concepts on graphs, we refer to \cite{shuman2016}. 

We describe the graph $G$ as a triplet $G=(V,E,\mathbf{A})$, where $V=\{v_1, \ldots, v_{n}\}$ denotes the set of vertices (or nodes) of the graph, $E \subseteq V \times V$ is the set of (directed or undirected) edges connecting the nodes and $\mathbf{A} \in \mathbb{R}^{n \times n}$ is a weighted, symmetric
and non-negative adjacency matrix containing the connection weights of the edges. The entire harmonic structure of the graph $G$ is encoded and described by this adjacency matrix $\mathbf{A}$. Note that, although $G$ can also be a directed graph, the symmetric matrix $\mathbf{A}$ gives an undirected harmonic structure on $G$. 

We aim at studying signals $x$ on the graph $G$, i.e. functions $x: V \rightarrow \mathbb{R}$ that associate a real value to each node of $V$. Since the number of nodes in $G$ is fixed (i.e. $n$) and the set $V$ is ordered, we can naturally represent the signal $x$ as a vector 
$x = (x_1, \ldots, x_n)\in \mathbb{R}^n$. Depending on the context, we will switch between these two 
representations. 

To define the Fourier transform on $G$, we consider
the (normalized) graph Laplacian $\mathbf{L}$ associated to the adjacency matrix $\mathbf{A}$:
\begin{equation*}
    \mathbf{L} := \mathbf{I_n} - \mathbf{T}^{-\frac{1}{2}}\mathbf{A}\mathbf{T}^{-\frac{1}{2}}.
\end{equation*}
Here, $\mathbf{I_n}$ is the $n \times n$ identity matrix, and $\mathbf{T}$ is the degree matrix with entries given as
\begin{equation*}
    (\mathbf{T})_{ij} := \left.
  \begin{cases}
    \sum_{k=0}^n (\mathbf{A})_{ik}, & \text{if } i=j \\
    0, & \text{otherwise}
  \end{cases}.
  \right.
\end{equation*}
Since the adjacency matrix $\mathbf{A}$ is symmetric, also $\mathbf{L}$ is a symmetric operator and we can compute its orthonormal eigendecomposition as
\begin{equation*}
\mathbf{L}=\mathbf{U}\mathbf{M}_{\lambda} \mathbf{U^\intercal},
\end{equation*}
where 
$\mathbf{M}_{\lambda} = \mathrm{diag}(\lambda) = \text{diag}(\lambda_1,\ldots,\lambda_{n})$ 
is the diagonal matrix with the increasingly ordered eigenvalues $\lambda_i$, $i \in \{1, \ldots, n\}$, of $\mathbf{L}$ as diagonal entries, i.e.,
$$(\mathbf{M}_{\lambda})_{ij} := 
\big(\mathrm{diag}(\lambda)\big)_{ij} =\begin{cases}
\lambda_i & \text{if }i=j\\
0 & \text{otherwise}
\end{cases}.
$$
The columns $ u_1, \ldots, u_{n}$ of the orthonormal matrix $\mathbf{U}$ are normalized eigenvectors of
$\mathbf{L}$ with respect to the eigenvalues $\lambda_1, \ldots, \lambda_n$. The ordered set $\hat{G} = \{u_1, \ldots, u_{n}\}$ of eigenvectors is an orthonormal basis for the space of signals on the graph $G$. We call $\hat{G}$ the spectrum of the graph $G$.

\subsection{Fourier transform on graphs} 
In classical Fourier analysis, as for instance the Euclidean space or the torus, the Fourier transform can be defined in terms of the eigenvalues and eigenfunctions of the Laplace operator.
In analogy, we consider the elements of $\hat{G}$, i.e. the eigenvectors $\{u_1, \ldots, u_{n}\}$, as the Fourier basis on the graph $G$. In particular, going back to our 
spatial signal $x$, we can define the graph Fourier transform of $x$ as
\begin{equation*}
\hat{x} := \mathbf{U^\intercal}x,
\end{equation*}
and its inverse graph Fourier transform as
\begin{equation*}
x := \mathbf{U}\hat{x}. 
\end{equation*}
The entries $\hat{x}_i = u_i^\intercal x$ of $\hat{x}$ are the frequency components or coefficients
of the signal $x$ with respect to the basis function $u_i$. For this reason, $\hat{x} : \hat{G} \to \Rr$ can be seen as a distribution on the spectral domain $\hat{G}$ of the graph $G$. 
To keep the notation simple, we will however usually represent spectral distributions $\hat{x}$ as 
vectors $(\hat{x}_1, \ldots, \hat{x}_n)$ in $\Rr^n$. Regarding the eigenvalues of the normalized
graph Laplacian $\mathbf{L}$ it is well-known that (see \cite[Lemma 1.7]{Chung})
\[0 = \lambda_1 \leq \lambda_2 \leq \cdots \leq \lambda_n \leq 2.\]

\subsection{Spatial and spectral filtering on graphs}

By using the graph Fourier transform to switch between spatial and spectral domain, we can now define (pointwise) multiplication and convolution between two signals $x$ and $y$. As there is no immediate description of translation on $G$, it is easier to define the convolution in the spectral domain, using an analogy to classical Fourier analysis in which the convolution of two signals is calculated as the pointwise product of their Fourier transforms. We define
\begin{equation}
    x * y := \mathbf{U} \left ( \hat{x}   \odot  \hat{y} \right ) = \mathbf{U} \left ( \left ( \mathbf{U^\intercal}x \right )  \odot \left ( \mathbf{U^\intercal} y \right )\right ) \label{eq:spectralfilter1},
\end{equation}
where $\hat{x} \odot \hat{y} := (\hat{x}_1 \hat{y}_1, \ldots, \hat{x}_{n} \hat{y}_{n})$
denotes the pointwise Hadamard product of the two vectors $\hat{x}$ and $\hat{y}$. 
The Hadamard product $x \odot y$ between two signals $x$ and $y$ can be formulated in matrix-vector notation as 
$x \odot y = \mathbf{M}_x y$ by applying the diagonal matrix 
$\mathbf{M}_x = \mathrm{diag}(x)$ to the vector $y$. In the same way, we get also for functions $\hat{x}$ and $\hat{y}$ on $\hat{G}$ the notion $\hat{x} \odot \hat{y} = \mathbf{M}_{\hat{x}} \hat{y}$. 
In this way, according to \eqref{eq:spectralfilter1}, we obtain for the convolution the identities
\begin{equation*}
    x * y = \mathbf{U}\mathbf{M}_{\hat{x}}\mathbf{U^\intercal}y = \mathbf{U}\mathbf{M}_{\hat{y}}\mathbf{U^\intercal} x.
\end{equation*}

\section{Space and frequency localization on graphs} \label{sec:spacefrequency}
\subsection{General setting}
We are going to study a space-frequency analysis for signals $x$ on the graph $G$ based on two
nonnegative normalized filter functions $f,g \in \Rr^n$ with the properties
\begin{equation} \label{eq:propspacefreq}
0 \leq f,g \leq 1 \quad \text{and} \quad \|f\|_\infty = \|g\|_\infty = 1.
\end{equation} 
Given the two filters $f$ and $g$ we introduce the following two operators
\begin{align*}
 \mathbf{M}_f x  &:= f \odot x, \\
 \mathbf{C}_g x  &:= g * x = \mathbf{U}\mathbf{M}_{\hat{g}}\mathbf{U^\intercal} x.
\end{align*}
The point-wise multiplication $\mathbf{M}_f$ with the filter $f$ will be referred to as space localization operator, the convolution $\mathbf{C}_g$ as frequency localization operator. From
the properties of $f$ and $g$ in \eqref{eq:propspacefreq} it follows immediately that both operators
$\mathbf{M}_f$ and $\mathbf{C}_g$ are symmetric and positive-semidefinite and the spectral norm of both operators is exactly $1$. For the operators $\mathbf{M}_f$ and $\mathbf{C}_g$ we define the expectation
values
\begin{align*}
\meanm(x) := \frac{\langle \mathbf{M}_f x, x \rangle }{\|x\|^2},\qquad 
\meanc(x) := \frac{\langle \mathbf{C}_g x, x \rangle}{\|x\|^2}.
\end{align*}

We say that a signal $x$ on $G$ is space-localized with respect to the 
window function $f$ if $\meanm(x)$ is close to one. In the same way, we say that 
$x$ on $G$ is frequency-localized with respect to $g$ 
if $\meanc(x)$ approaches one. Based on the mean values $\meanm(x)$ and $\meanc(x)$, we define the set of admissible values related to the operators $\mathbf{M}_f$ and $\mathbf{C}_g$ as 
\begin{equation} \label{eq:numericalrange} \mathcal{W}(\mathbf{M}_f, \mathbf{C}_g) := \left\{(\meanm(x),\meanc(x)) \; | \; \|x\| = 1 \right\} \subset [0,1]^2. 
\end{equation}
Due to a relation profoundly described in Section \ref{sec:numericalrange}, we call $\mathcal{W}(\mathbf{M}_f, \mathbf{C}_g)$ also the numerical range of the pair $(\mathbf{M}_f, \mathbf{C}_g)$ of operators. All the uncertainty principles studied in this work are linked to the boundaries of the set $\mathcal{W}(\mathbf{M}_f, \mathbf{C}_g)$.

\subsection{Space-frequency operators on graphs}
To investigate the joint localization of a signal $x$ with
respect to the spatial filter $f$ and the frequency filter $g$, as well as for the description of the set $\mathcal{W}(\mathbf{M}_f, \mathbf{C}_g)$, we consider the
following two space-frequency operators on the graph $G$: 
\begin{equation*}
\Rop := \cos(\theta) \, \mathbf{M}_f + \sin (\theta) \, \mathbf{C}_{g} \quad \text{and} \quad
\Sop := \mathbf{C}_{\groot} \mathbf{M}_f \mathbf{C}_{\groot}.
\end{equation*}
The linear combination $\Rop$ of the two symmetric matrices $\Mop$ and $\Cop$ is symmetric for any angle $0 \leq \theta < 2 \pi$ and positive semidefinite if $0 \leq \theta \leq \frac{\pi}{2}$. The operator norm of $\Rop$ is bounded by $|\cos \theta| + |\sin \theta| \leq \sqrt{2}$. 
The composition $\mathbf{S}_{f,g} \in \Rr^{n \times n}$ is a positive semidefinite, symmetric matrix with spectral norm bounded by $1$.

To study the space-frequency operators $\Rop$ and $\Sop$ we focus on their eigendecompositions
\[\Rop = \mathbf{\Phi}\mathbf{M}_{\rho}\mathbf{\Phi^\intercal},\quad \text{and} \quad \mathbf{S}_{f,g} = \mathbf{\Psi} \mathbf{M}_{\sigma}\mathbf{\Psi^\intercal}. \]
The decreasingly ordered eigenvalues $\rho = \rho^{(\theta)} = (\rho_1^{(\theta)}, \ldots \rho_n^{(\theta)})$ of the matrix $\Rop$ are real and contained in $[-\sqrt{2}, \sqrt{2}]$. The decreasingly ordered eigenvalues $\sigma = (\sigma_1, \ldots \sigma_n)$ of $\Sop$, are non-negative and smaller than $1$. The columns $\{\phi_1^{(\theta)}, \ldots, \phi_{n}^{(\theta)}\}$ of $\mathbf{\Phi}$ and $\{\psi_1, \ldots, \psi_{n}\}$ of $\mathbf{\Psi}$ form a complete set of orthonormal eigenvectors of the operators $\Rop$ and $\Sop$, respectively.  We say that a signal $x$ is space-frequency localized with respect to the filters $f$ and $g$ if the expectation values
$$ \meanr(x) := \frac{\langle \Rop x, x \rangle}{\|x\|^2}
= \cos(\theta) \, \meanm(x) + \sin(\theta) \, \meanc(x) \quad \text{and} \quad \means(x) := \frac{\langle \mathbf{S}_{f,g} x, x \rangle}{\|x\|^2}$$
get close to one. The largest eigenvalues $\rho_1^{(\theta)}$ and $\sigma_1$
and the corresponding eigenvectors $\phi_1^{(\theta)}$ and $\psi_1$,
will be of major importance of us. For the largest eigenvalue $\sigma_1$ of the space-frequency operator $\mathbf{S}_{f,g}$ we get additionally the following characterizations.

\begin{result} \label{res:sigma}
The largest eigenvalue $\sigma_1$ of the space-frequency operator $\mathbf{S}_{f,g}$
corresponds to the following spectral operator norms:
\[\sigma_1 = \|S_{f,g}\| = \|\mathbf{M}_{\froot} \mathbf{C}_{\groot}\|
^2 = \| \mathbf{C}_{\groot} \mathbf{M}_{\froot}\|^2 = \| \mathbf{M}_{\froot} \mathbf{C}_{g} \mathbf{M}_{\froot}\|.\]
\end{result}

\subsection{Examples of space-frequency filters on graphs} \label{sec:examples}

\begin{enumerate}[label = (\arabic*)]
\item (Landau-Pollak-Slepian filters or projection-projection filters) Let $\chi_{\mathcal{A}}$ denote the indicator function of a set $\mathcal{A}$, i.e.
\[ \chi_{\mathcal{A}}(v) := \left\{ \begin{array}{ll} 1 & \text{if $v \in \mathcal{A}$,} \\
0 & \text{if $v \notin \mathcal{A}$.} \end{array} \right. \]
For a subset $\mathcal{A}$ of the node set $V$ and a subset $\mathcal{B}$ of the spectrum $\hat{G}$, we define the filter functions $f$ and $g$ as
\begin{equation} \label{eq:slepianfilter}
f = \chi_{\mathcal{A}} \quad \hat{g} = \chi_{\mathcal{B}}.
\end{equation}
Both, the space-localization $\mathbf{M}_f$ and the frequency-localization operator $\mathbf{C}_g$ defined in terms of $f$ and $\hat{g} = \mathbf{U}^{\intercal} g$ in \eqref{eq:slepianfilter} are projection operators satisfying
$\mathbf{M}_f^2 = \mathbf{M}_f$ and $\mathbf{C}_g^2 = \mathbf{C}_g$. The space-frequency operator
$\mathbf{S}_{f,g}$ is in this case equivalently given as $\mathbf{S}_{f,g} = \mathbf{C}_{g} \mathbf{M}_{f} \mathbf{C}_g$. For signals on the real line, the space-frequency analysis related to these 
projection operators, including the study of uncertainty principle and the distribution of the eigenvalues $\sigma_i$ was studied intensively by Landau, Pollak and Slepian in a series of papers in the sixties of the last century, cf. \cite{LandauPollak1961,LandauPollak1962,Slepian1978,SlepianPollak1961}. A variant of this theory on the unit sphere is given in \cite{PlattnerSimons2014,SimonsDahlenWieczorek2006}. A general theory based on two projection operators in an abstract Hilbert spaces, can be found in \cite[Chapter 3 \S 1]{HavinJoericke}. The translation to the graph setting was conducted in \cite{TBL2016}.
 
\item (Distance-projection filters) As a second spatial filter, we want to generate a window function that limits a signal $x$ if the distance $\dist(v,w)$ to a point $w$ on the graph gets large. In general, $\dist(v,w)$ can be any distance metric on the nodes $V$ of the graph. In this article, we will use the geodesic distance on the graph as a metric $\dist$, i.e. $\dist(v,w)$ is the length of the shortest path connecting the nodes $v$ and $w$. We further set
$$\distw(v) := \dist(v,w), \quad \distwmax := \max_{v \in V} \dist(v,w).$$
Then, as spatial filter $f$ and frequency filter $g$, we define
\begin{equation} \label{eq:distprojfilter}
f(v) = 1 - \distw(v)/\distwmax,\; v \in V, \quad \text{and} \quad \hat{g} = \chi_{\mathcal{B}},
\end{equation}
i.e., the spatial filter $f$ incorporates the distance $\distw$ to a reference node $w$ and the frequency filter $g$, as before, describes the projection on a spectral subset $\mathcal{B} \subset \hat{G}$. For  $f$ we have 
\[ \mathbf{M}_f x = \mathbf{I}_n - \ts \frac{1}{\distwmax} \mathbf{M}_{\distw} x, \quad 
\meanm(x) = 1 - \ts \frac{ \ds x^{\intercal} \mathbf{M}_{\distw}x}{\distwmax \|x\|^2}.\]
Similar distance-projection filters were used in a 
continuous setup for orthogonal expansions on the interval $[-1,1]$ \cite{erb2012,erb2013,Klaja2016} and on the unit sphere \cite{ErbMathias2015}.
\item (Modified distance-projection filters)
While the projection filter $\hat{g} = \chi_{\mathcal{B}}$ generates bandlimited signals on the graph $G$ (with support in the frequency "band" $\mathcal{B} \subset \hat{G}$), in applications it is often relevant to additionally soften the higher frequencies. 
This can be achieved by multiplying the projection filter $\chi_{\mathcal{B}}$ in the spectral domain with a filter function $\hat{g}^{(\beta)}$ in which the coefficients $0 \leq \hat{g}_k^{(\beta)} \leq 1$ decay for increasing frequency $k$. Also for the distance filter $f$ given in \eqref{eq:distprojfilter} slight
modifications can be useful in order to alter the influence of the distance. A simple possibility here is
to add an additional power $\alpha>0$ to the distance function. 
In this way, we get as modified filters 
\begin{equation} \label{eq:distprojsmoothfilter}
f(v) = 1 - \left(\frac{\distw(v)}{\distwmax}\right)^{\alpha}, \quad \hat{g} = \chi_{\mathcal{B}} \odot \hat{g}^{(\beta)}.
\end{equation}
The effects of such modifications to the shape of the uncertainty principles will be investigated in  Section \ref{sec:shapeexamples}. An example of such a modified filter function in the frequency domain is given in \cite{VanDeVille2016}.
Here, the author proposes to use the eigenvalues $\lambda_i$ of the graph Laplacian to 
define the components $\hat{g}^{(\beta)}_i$ of the additional filter.
\item (Distance-Laplace filter)
Another spectral filter $\hat{g} = (\hat{g}_1, \ldots, \hat{g}_n)$ on $\hat{G}$ can be defined as  
\begin{equation} \label{eq:laplacefilter}  \hat{g}_k = 1 - \lambda_k/2, \end{equation}
where $\lambda_k$ denotes the $k$-th. smallest eigenvalue of the graph Laplacian $\mathbf{L}$. 
In this case, we get
\[ \mathbf{C}_g x = \mathbf{U} (\mathbf{I}_n - \ts \frac{1}{2} \mathbf{M}_{\lambda}) \mathbf{U}^{\intercal} x
= (\mathbf{I}_n - \ts \frac{1}{2}\mathbf{L}) x. \]
Dividing by the factor $2$ in \eqref{eq:laplacefilter} guarantees that the spectrum of $\mathbf{C}_g$ is contained
in $[0,1]$ (as the spectrum of the normalized Laplacian $\mathbf{L}$ is contained in $[0,2]$).  
Using the modified distance filter from \eqref{eq:distprojsmoothfilter} with $\alpha = 2$ as a spatial filter, we get 
\[ \meanm(x) = 1 - \ts \frac{x^{\intercal} \mathbf{M}_{\distw^2}x}{(\distwmax)^2 \|x\|^2} , \qquad \meanc(x) = 1 - \frac{x^{\intercal} \mathbf{L} x}{2 \|x\|^2} .\]
In \cite{AgaskarLu2013}, the measure $\bar{\mathbf{c}}_{\mathbf{U} \lambda}(x) = x^{\intercal} \mathbf{L} x /\|x\|^2$ is called the spectral spread of $x$ while the measure 
$\bar{\mathbf{m}}_{\distw^2}(x) = x^{\intercal} \mathbf{M}_{\distw^2}x / \|x\|^2$ is denoted as the graph spread of $x$ with respect to the node $w$. In \cite{AgaskarLu2013} an uncertainty principle on graphs is formulated in terms of these two spreads. We will show that this uncertainty principle fits as a special case in our more general framework. 
\item (Laplace-Laplace filter) Instead of defining the operators $\mathbf{M}_f$ and $\mathbf{C}_g$ for signals on the graph domain, it is also possible to define them in terms of distributions $\hat{x}$ in the spectral domain $\hat{G}$. In this sense, we can introduce the operators $\mathbf{M}_{\hat{f}}$ and $\mathbf{C}_{\hat{g}}$ as
\[\mathbf{M}_{\hat{f}} \hat{x} = \hat{f} \odot \hat{x}, \quad \mathbf{C}_{\hat{g}} \hat{x} = 
  \hat{g} \ast \hat{x} = \mathbf{U} \mathbf{M}_{\hat{\hat{g}}} \mathbf{U}^{\intercal} \hat{x}.\]
An example of such a filtering is given in \cite{BenedettoKoprowski2015}. Here, $f$ and $g$ are given as
\[ \hat{f} = 1 - \ts \frac12 \lambda, \qquad \hat{g} = 1 - \ts \frac12 \lambda.\]
In \cite{BenedettoKoprowski2015}, the filters are actually formulated in terms of $\hat{f} = \hat{g} = \lambda$. The reformulation above guarantees that the entries $\hat{f}$ and $\hat{g}$ are between $0$ and $1$ and that the relevant part of the uncertainty curve is located at $(1,1)$ instead of $(0,0)$. 
\end{enumerate}

\subsection{Comparison to space-frequency analysis based on windowed Fourier transform}

An interesting source for uncertainty principles and space-frequency analysis on graphs is based on the windowed Fourier transform \cite{perraudin2018,shuman2012,shuman2016}. For this, we want to give a brief comparison between the space-frequency concepts studied in this article and those related to the windowed Fourier transform. For a window function $h: G \to \Rr$, the windowed Fourier transform 
$\mathbf{F}_{h} x $ of a signal $x$ is defined in the domain $G \times \hat{G}$ as 
\begin{equation} \label{eq:windowedFouriertransform}
\mathbf{F}_{h} x (v_i,u_k) := x^{\intercal} (\mathbf{M}_{u_k} \mathbf{C}_{e_i} h),
\end{equation}
where $e_i = \chi_{v_i}$, $i \in \{1, \ldots, n\}$ simply denote the standard basis vectors for the space of signals on $G$. In this definition, $\mathbf{C}_{e_i} h$ can be interpreted as a generalized shift of the window $h$ on $G$ whereas $\mathbf{M}_{u_k}$ describes a generalized modulation in terms of the Fourier
basis $u_k$. The space-frequency analysis related to the windowed Fourier transform uses 
the coefficients $\mathbf{F}_{h} x (v_i,u_k)$ to analyse the signal $x$. Further, the system 
$\{\mathbf{M}_{u_k} \mathbf{C}_{e_i} h\}_{i,k}$ provides a frame for the space of signals on $G$ if
$\hat{h}_1 \neq 0$. Compared to the space-frequency analysis studied in this paper, there are some conceptual differences:
\begin{enumerate}
\item The windowed Fourier transform is based on the choice of a single window function $h$, compared to the two filters $f$ and $g$ for the operators $\Sop$ and $\Rop$.
\item The space-frequency analysis related to the windowed Fourier transform is based on the frame system $\{\mathbf{M}_{u_k} \mathbf{C}_{e_i} h\}_{i,k}$ (cf. \cite{shuman2016}) compared to the orthogonal basis of eigenfunctions $\{\psi_k\}$ and $\{ \phi_k^{(\theta)}\}$ of the operators $\Sop$ and $\Rop$.
\item Uncertainty relations are formulated in terms of the frame coefficients in \eqref{eq:windowedFouriertransform}, see \cite{perraudin2018}.
\end{enumerate}
There are several possibilities to vary the definition in \eqref{eq:windowedFouriertransform}, leading to a similar space-frequency analysis. A simple example here is to exchange the order of $\mathbf{M}_{u_k}$ and $\mathbf{C}_{e_i}$ in \eqref{eq:windowedFouriertransform}. More detailed discussions about the windowed Fourier transform and frame decompositions can be found in \cite{perraudin2018,shuman2016}.

\section{Uncertainty principles related to the operator $\mathbf{S}_{f,g}$} 
\label{sec:spacefrequencyLPS}
We start with a first uncertainty relation for general filter functions $f$ and $g$
with the normalization \eqref{eq:propspacefreq} that rely on the maximal eigenvalue $\sigma_1$ of the space-frequency operator $\Sop$. This type of uncertainty principle was first studied by Landau and Pollak \cite{LandauPollak1961} for projection operators acting on functions on the real line. The corresponding relation for projection operators on graphs was formulated in \cite{TBL2016}. In our setting, this corresponds to filter functions $f$ and $g$ defined in terms of an indicator function, i.e. the setting of Example (1) in Section \ref{sec:examples}. Goal of this section is to prove this uncertainty principle now for general localization operators $\mathbf{M}_f$ and $\mathbf{C}_g$.

If $\|\mathbf{S}_{f,g}\| = \sigma_{1} < 1$, we can specify this uncertainty relation on $G$ by providing an explicit bound for the admissibility region $\mathcal{W}(\mathbf{M}_f, \mathbf{C}_g)$. This bound is  based on the curve
\[
\gamma_{f,g}: [\sigma_{1},1] \to \Rr:\quad \gamma_{f,g}(t) := \big((t \, \sigma_{1})^{\frac{1}{2}} + 
( (1-t)(1-\sigma_{1}) )^{\frac{1}{2}}\big)^2.
\]

\begin{lemma} \label{lemma-uncertaintyangle}
Assume that $\sigma_{1} < 1$. If $ \meanm(x) \meanc(x) \geq \sigma_{1}$, then the inequality
\begin{equation} \label{equation-uncerataintyangle1a}
\arccos \frac{\meanm(x)}{\|\mathbf{M}_f x\|} + \arccos \frac{\meanc(x)}{\|\mathbf{C}_g x\|} \geq \arccos \frac{\sqrt{\meanm(x)} \sqrt{\meanc(x)} }{\|\mathbf{M}_f x\| \|\mathbf{C}_g x\|} \sqrt{\sigma_{1}}
\end{equation}
holds true. This implies the upper bound 
\begin{equation} \label{equation-uncerataintyangle1b}
\meanc(x) \leq \gamma_{f,g}(\meanm(x)) = \big((\meanm(x) \sigma_{1})^{\frac{1}{2}} + ((1-\meanm(x))
(1-\sigma_{1}))^{\frac{1}{2}}\big)^2
\end{equation} 
for $\meanc(x)$ in the domain $\meanm(x) \meanc(x) \geq \sigma_{1}$. 
\end{lemma}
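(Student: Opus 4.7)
I would read \eqref{equation-uncerataintyangle1a} as a spherical triangle inequality in $\Rr^n$. After normalising $\|x\|=1$, one recognises
\[
\theta_1 := \arccos\frac{\meanm(x)}{\|\Mop x\|} = \angle(x,\Mop x), \qquad
\theta_2 := \arccos\frac{\meanc(x)}{\|\Cop x\|} = \angle(x,\Cop x),
\]
so its left-hand side is the sum of two spherical distances meeting at $x$. The spherical triangle inequality gives $\theta_1+\theta_2 \ge \angle(\Mop x,\Cop x)$, reducing the task to bounding $\cos\angle(\Mop x,\Cop x)$ from above by the fraction on the right of \eqref{equation-uncerataintyangle1a}.

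For that cosine bound, I would factor the operators through their positive square roots as $\Mop=\mathbf{M}_{\froot}^{2}$ and $\Cop=\mathbf{C}_{\groot}^{2}$ (valid since $0\le f,g\le 1$), so that
\[
\langle \Mop x,\Cop x\rangle = \bigl\langle \mathbf{M}_{\froot}\,x,\; (\mathbf{M}_{\froot}\mathbf{C}_{\groot})\,(\mathbf{C}_{\groot}\,x)\bigr\rangle.
\]
Cauchy--Schwarz, combined with the operator-norm identity $\|\mathbf{M}_{\froot}\mathbf{C}_{\groot}\|^{2}=\sigma_{1}$ from Property~\ref{res:sigma} and the trivial evaluations $\|\mathbf{M}_{\froot}x\|^{2}=\meanm(x)$ and $\|\mathbf{C}_{\groot}x\|^{2}=\meanc(x)$, then gives the central estimate
\[
|\langle \Mop x,\Cop x\rangle|\;\leq\;\sqrt{\meanm(x)\,\meanc(x)\,\sigma_{1}}.
\]
Dividing by $\|\Mop x\|\|\Cop x\|$ and applying the triangle inequality yields \eqref{equation-uncerataintyangle1a}; the hypothesis $\meanm(x)\meanc(x)\ge\sigma_1$ together with the Cauchy--Schwarz bound $\|\Mop x\|\ge\meanm(x)$ (and analogously for $g$) keeps the argument of $\arccos$ on the right inside $[0,1]$.

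To derive \eqref{equation-uncerataintyangle1b} from \eqref{equation-uncerataintyangle1a}, I would expand $\cos(\theta_1+\theta_2)$ and use two complementary consequences of $0\le f\le 1$: the Cauchy--Schwarz bound $\meanm(x)\le\|\Mop x\|$ and the operator inequality $\Mop^{2}\le\Mop$, which together force $\|\Mop x\|^{2}\le\meanm(x)$ and the gap bound $\|\Mop x\|^{2}-\meanm(x)^{2}\le\meanm(x)(1-\meanm(x))$, and similarly for $g$. Substituting these into the expanded form of \eqref{equation-uncerataintyangle1a} collapses it into
\[
\sqrt{\meanm(x)\meanc(x)}\;\leq\;\sqrt{\sigma_{1}}+\sqrt{(1-\meanm(x))(1-\meanc(x))},
\]
which, under the substitution $\sqrt{\meanm(x)}=\cos\alpha$, $\sqrt{\meanc(x)}=\cos\beta$, $\sqrt{\sigma_{1}}=\cos\gamma$ with $\alpha,\beta,\gamma\in[0,\pi/2]$, reads $\cos(\alpha+\beta)\le\cos\gamma$, i.e.\ $\alpha+\beta\ge\gamma$. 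The hypothesis also forces $\alpha\le\gamma$, so $\gamma-\alpha\in[0,\pi/2]$ and $\beta\ge\gamma-\alpha$ is equivalent to $\cos\beta\le\sqrt{\sigma_1\meanm(x)}+\sqrt{(1-\sigma_1)(1-\meanm(x))}$; squaring yields \eqref{equation-uncerataintyangle1b}. The delicate step is precisely this last passage: the double use of $\Mop^{2}\le\Mop$, controlling both $\|\Mop x\|^{2}$ from above and the gap $\|\Mop x\|^{2}-\meanm(x)^{2}$, is what makes the algebra collapse into the classical Landau--Pollak cosine identity underlying \eqref{equation-uncerataintyangle1b}.
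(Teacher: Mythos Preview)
Your proposal is correct and follows the same architecture as the paper: the spherical triangle inequality for $\angle(x,\Mop x)+\angle(x,\Cop x)\ge\angle(\Mop x,\Cop x)$, then a Cauchy--Schwarz bound on $\langle\Mop x,\Cop x\rangle$ via the square-root factorisations $\Mop=\mathbf{M}_{\froot}^2$, $\Cop=\mathbf{C}_{\groot}^2$ and Property~\ref{res:sigma}, yielding \eqref{equation-uncerataintyangle1a}; both proofs then reduce to the clean intermediate inequality $\arccos\sqrt{\meanm(x)}+\arccos\sqrt{\meanc(x)}\ge\arccos\sqrt{\sigma_1}$ and finish with the cosine subtraction formula.

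The one genuine difference is in how that reduction is carried out. The paper invokes a calculus-type monotonicity fact, namely that $t\mapsto\arccos(bt)-\arccos(at)$ is decreasing for $0<a\le b\le1$, and applies it twice to strip the norms $\|\Mop x\|$ and $\|\Cop x\|$ from \eqref{equation-uncerataintyangle1a}. You instead expand $\cos(\theta_1+\theta_2)$ directly and use the operator inequality $\Mop^2\le\Mop$ (valid since $0\le f\le1$) to obtain $\|\Mop x\|^2-\meanm(x)^2\le\meanm(x)(1-\meanm(x))$, which makes the algebra collapse in one step. Your route is a bit more transparent, since the single ingredient $\Mop^2\le\Mop$ is exactly what distinguishes the general positive-semidefinite case from the projection case and is easy to verify; the paper's monotonicity lemma encodes the same information but less visibly. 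Both arguments are equally short.
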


\begin{proof}
Let $x$ be a normalized signal on the graph $G$ with $\|x\| = 1$. Further, we consider the two
normalized vectors
$$y_1 = \frac{1}{\|\mathbf{M}_{f} x\|} \mathbf{M}_{f} x \quad \text{and} \quad 
y = \frac{1}{\| \mathbf{C}_g x\|} \mathbf{C}_g x.$$ 
The angular distance is a metric for vectors on the unit sphere. In particular, the sum of the angular distances between the vectors $y_1$ and $x$, and $y_2$ and $x$ is always larger than the angular distance between $y_1$ and $y_2$, i.e.
\begin{equation} \label{equation-proofuncertainty1}
\arccos \langle y_1, x \rangle + \arccos \langle y_2, x \rangle \geq \arccos \langle y_1,y_2 \rangle.
\end{equation}
For the term $\langle y_1,y_2 \rangle$, we can find an upper bound using the Cauchy-Schwarz-inequality:
\begin{align*}
\langle y_1,y_2 \rangle & \leq |\langle y_1,y_2 \rangle| = \frac{|\langle \mathbf{M}_f x, \mathbf{C}_g x \rangle|}{\|\mathbf{M}_f x\| \|\mathbf{C}_g x\|} = \frac{|\langle \mathbf{M}_{f^{1/2}} x, \mathbf{M}_{f^{1/2}} \mathbf{C}_g x \rangle|}{\|\mathbf{M}_f x\| \|\mathbf{C}_g x\|}\\ &\leq
\frac{\|\mathbf{M}_{f^{1/2}} x \| \| \mathbf{M}_{f^{1/2}} \mathbf{C}_g x \|}{\|\mathbf{M}_f x\| \|\mathbf{C}_g x\|} 
= \frac{\sqrt{\meanm(x)} \sqrt{ \langle \mathbf{C}_{g^{1/2}} \mathbf{M}_{f} \mathbf{C}_g x, \mathbf{C}_{g^{1/2}} x \rangle}}{\|\mathbf{M}_f x\| \|\mathbf{C}_g x\|} \\
& \leq \frac{\sqrt{\meanm(x)} \sqrt{\meanc(x)}\sqrt{ \langle \mathbf{C}_{g^{1/2}} \mathbf{M}_{f} \mathbf{C}_{g^{1/2}} x, x \rangle}}{\|\mathbf{M}_f x\| \|\mathbf{C}_g x\|}
\leq \frac{\sqrt{\meanm(x)} \sqrt{\meanc(x)} }{\|\mathbf{M}_f x\| \|\mathbf{C}_g x\|} \sqrt{\sigma_{1}}.
\end{align*}
As we assume that $ \meanm(x) \meanc(x) \geq \sigma_{1}$ the last expression is smaller than 1. 
Therefore in \eqref{equation-proofuncertainty1}, we get 
\begin{align*}
\langle y_1,y_2 \rangle \leq \frac{\sqrt{\meanm(x)} \sqrt{\meanc(x)} }{\|\mathbf{M}_f x\| \|\mathbf{C}_g x\|} \sqrt{\sigma_{1}} \leq 1, \quad \langle y_1,x \rangle = \frac{\meanm(x)}{\|\mathbf{M}_f x\|}, \quad
\langle y_2,x \rangle &= \frac{\meanc(x)}{\|\mathbf{C}_g x\|},
\end{align*}
and, thus, precisely the inequality \eqref{equation-uncerataintyangle1a}
To demonstrate the second inequality we make use of the following fact:
\begin{equation} \label{eq:fact}
\text{if $0 < a \leq b \leq 1$, then $\arccos bt - \arccos at$ is a decreasing function in $t \in \ts \left[-\frac{1}{b}, \frac{1}{b}\right]$.}
\end{equation}
Therefore, by setting $a = \frac{\sqrt{\meanc(x)} }{\|\mathbf{C}_g x\|} \sqrt{\sigma_{1}} \leq  \frac{\meanc(x) }{\|\mathbf{C}_g x\|}\sqrt{\meanm(x)} \leq \sqrt{\meanm(x)} = b\leq 1$, we can apply 
\eqref{eq:fact} to inequality \eqref{equation-uncerataintyangle1b} and obtain
\begin{equation*} 
\arccos \sqrt{\meanm(x)}+ \arccos \frac{\meanc(x)}{\|\mathbf{C}_g x\|} \geq \arccos \frac{ \sqrt{\meanc(x)} }{ \|\mathbf{C}_g x\|} \sqrt{\sigma_{1}}.
\end{equation*} 
Applying \eqref{eq:fact} a second time with $a = \sqrt{\sigma_{1}} \leq  \sqrt{\meanc(x)} = b\leq 1$,
we get
\begin{equation*} 
\arccos \sqrt{\meanm(x)}+ \arccos \sqrt{\meanc(x)} \geq \arccos \sqrt{\sigma_{1}}
\end{equation*}  
in the domain $\meanm(x) \meanc(x) \geq \sigma_{1}$. Applying the trigonometric identity 
$\cos (\alpha - \beta) = \cos \alpha \cos \beta - \sin \alpha \cos \beta$ we finally obtain
the inequality \eqref{equation-uncerataintyangle1b} as
\[\meanc(x)^{\frac{1}{2}} \leq \cos ( \arccos \sqrt{\sigma_{1}} - \arccos \sqrt{\meanm(x)})
=  (\meanm(x) \sigma_{1})^{\frac{1}{2}} + ((1-\meanm(x))
(1-\sigma_{1}))^{\frac{1}{2}}.\]
\end{proof}

\begin{remark}
Note that for $t \in [\sigma_{1},1]$ we have the following inequalities
\[ \sigma_{1} \leq \frac{\sigma_{1}}{t} \leq - t + 1 + \sigma_{1} \leq \gamma_{f,g}(t) \leq 1.  \]
Therefore, in the square $[\sigma_{1},1]^2$ we have the relations
\[ \left\{(t,s) \in [\sigma_{1},1]^2 \ | \ s \geq \gamma_{f,g}(t) \right\} \subset \left\{(t,s) \in [\sigma_{1},1]^2 \ | \ t s \geq \sigma_{1} \right\} \subset [\sigma_{1},1]^2 . \]
\end{remark}

Lemma \ref{lemma-uncertaintyangle} provides a general restriction 
of the set $\mathcal{W}(\mathbf{M}_f, \mathbf{C}_g)$ in the upper right corner 
$[\sigma_{1},1]^2$ of the unit square. By simple reflections, we get an analogous result for the 
other three corners. To simplify the notation we define the corresponding reflection operator $^*$ on the filters $f$ and $g$ as $f^* = 1 - f$ and $g^* = 1 - g$. Further, to distinguish eigenvalues $\sigma_{1}$ for different filters, we use in this part the extended notation $\sigma_1^{(f,g)}$ to denote the largest eigenvalue of the operator $\Sop$. We consider now the following subdomain of the square $[0,1]^2$ (see Figure \ref{Figure-uncertainty} (left)):

\begin{align*}
\mathcal{W}_{\gamma} &:= \left\{(t,s) \in [0,1]^2 \, \left| \; 
\begin{array}{ll}
s \leq \gamma_{f,g}(t) & \text{if}\; t s \geq \sigma_{1}^{(f,g)},\\
1-s \leq \gamma_{f,g^*}(t) & \text{if}\; t (1-s) \geq \sigma_1^{(f,g^*)},\\
s \leq \gamma_{f^*,g}(1-t) & \text{if}\; (1-t) s \geq \sigma_1^{(f^*,g)},\\
1-s \leq \gamma_{f^*,g^*}(1-t) & \text{if}\; (1-t) (1-s) \geq \sigma_1^{(f^*,g^*)}
\end{array}\right.
\right\}.
\end{align*}

Lemma \ref{lemma-uncertaintyangle} now implies the following:

\begin{theorem} \label{Theorem-uncertainty}
The range $\mathcal{W}(\mathbf{M}_f, \mathbf{C}_g)$ is contained in $\mathcal{W}_{\gamma}$.
\end{theorem}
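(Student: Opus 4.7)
The plan is to apply Lemma \ref{lemma-uncertaintyangle} four times, once to each of the filter pairs $(f,g)$, $(f,g^*)$, $(f^*,g)$, $(f^*,g^*)$, so that each invocation contributes one of the four conditions defining $\mathcal{W}_\gamma$. The geometric picture is that Lemma \ref{lemma-uncertaintyangle} controls $\mathcal{W}(\mathbf{M}_f,\mathbf{C}_g)$ near the upper right corner of the unit square, and the reflection $^*$ lets us rotate this control into each of the remaining three corners.

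The key computational input is that the reflection interacts cleanly with the operators: since $f^*=1-f$ and $g^*=1-g$, one has $\mathbf{M}_{f^*}=\mathbf{I}_n-\mathbf{M}_f$ and $\mathbf{C}_{g^*}=\mathbf{I}_n-\mathbf{C}_g$, and consequently, for any unit vector $x$,
\[
\bar{\mathbf{m}}_{f^*}(x)=1-\meanm(x),\qquad \bar{\mathbf{c}}_{g^*}(x)=1-\meanc(x).
\]
The reflected filters still satisfy $0\le f^*,g^*\le 1$, which is the only condition on the filters actually used in the proof of Lemma \ref{lemma-uncertaintyangle} (the normalization $\|f\|_\infty=\|g\|_\infty=1$ is used only to ensure $\sigma_1\le 1$, and this is automatic here since $\mathbf{M}_{f^*}$ and $\mathbf{C}_{g^*}$ inherit spectral norms bounded by $1$). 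Hence Lemma \ref{lemma-uncertaintyangle} is directly applicable to each of the four filter pairs, with $\sigma_1$ in each case being the corresponding top eigenvalue $\sigma_1^{(f,g)}$, $\sigma_1^{(f,g^*)}$, $\sigma_1^{(f^*,g)}$, $\sigma_1^{(f^*,g^*)}$.

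Fix $x$ with $\|x\|=1$ and set $t=\meanm(x)$, $s=\meanc(x)$. Applying Lemma \ref{lemma-uncertaintyangle} to $(f,g)$ yields: if $ts\ge\sigma_1^{(f,g)}$, then $s\le\gamma_{f,g}(t)$. Applying it to $(f,g^*)$ with the identity $\bar{\mathbf{c}}_{g^*}(x)=1-s$: if $t(1-s)\ge\sigma_1^{(f,g^*)}$, then $1-s\le\gamma_{f,g^*}(t)$. Applying it to $(f^*,g)$: if $(1-t)s\ge\sigma_1^{(f^*,g)}$, then $s\le\gamma_{f^*,g}(1-t)$. Finally, applying it to $(f^*,g^*)$: if $(1-t)(1-s)\ge\sigma_1^{(f^*,g^*)}$, then $1-s\le\gamma_{f^*,g^*}(1-t)$. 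Together these four implications are exactly the four conditions in the definition of $\mathcal{W}_\gamma$, so $(t,s)\in\mathcal{W}_\gamma$, proving the theorem.

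The only subtle point, and the one on which the argument really rests, is that Lemma \ref{lemma-uncertaintyangle} remains valid when the reflected filters fail the strict normalization $\|\,\cdot\,\|_\infty=1$. Inspecting the lemma's proof, the single place this matters is in bounding $\|\mathbf{M}_{f^{1/2}}\mathbf{C}_g\mathbf{M}_{f^{1/2}}\|$ by $\sigma_1$ via Property \ref{res:sigma}; this identity depends only on $0\le f,g\le 1$, so nothing breaks. The main obstacle is thus conceptual bookkeeping rather than new estimation: one must carefully match the hypothesis region of each application to the correct corner condition in $\mathcal{W}_\gamma$ and verify that the reflection indeed preserves positivity and boundedness of the operators involved.
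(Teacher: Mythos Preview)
Your proposal is correct and follows exactly the approach the paper takes: the paper states that Lemma \ref{lemma-uncertaintyangle} controls the upper right corner and that ``by simple reflections, we get an analogous result for the other three corners,'' then declares the theorem an immediate consequence. You have supplied the details of those reflections (the identities $\bar{\mathbf{m}}_{f^*}(x)=1-\meanm(x)$, $\bar{\mathbf{c}}_{g^*}(x)=1-\meanc(x)$ and the verification that Lemma \ref{lemma-uncertaintyangle} applies to the reflected filters) more explicitly than the paper itself does.
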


\begin{remark} 
\leavevmode
\begin{enumerate}
\item If $\sigma_{1} = \sigma_{1}^{(f,g)}< 1$ (or, similarly, if $\sigma_1^{(f^*,g)}<1$, $\sigma_1^{(f,g^*)}<1$ or 
$\sigma_1^{(f^*,g^*)}<1$) then Theorem \ref{Theorem-uncertainty} is an uncertainty relation for the operators $\mathbf{M}_f$ and $\mathbf{C}_g$. It states that a signal $x$ on the graph can not be well localized with respect to both operators $\mathbf{M}_f$ and $\mathbf{C}_g$. In particular, the 
vector of mean values $(\meanm(x),\meanc(x))$ can not get close to $(1,1)$. 
\item The uncertainty statement in Theorem \ref{Theorem-uncertainty} can get sharp in the sense that if $\mathbf{M}_f$ and $\mathbf{C}_g$ are two projection operators, then we get equality $\mathcal{W}(\mathbf{M}_f, \mathbf{C}_g) = \mathcal{W}_{\gamma}$ in Theorem \ref{Theorem-uncertainty}. For graphs this is shown in \cite[Theorem 3.1]{TBL2016}. This fact can be interpreted in the following way: among all pairs of positive definite operators with spectral norm $1$
and eigenvalues $\sigma_{1}^{(f,g)}$, $\sigma_1^{(f^*,g)}$, $\sigma_1^{(f,g^*)}$, and $\sigma_1^{(f^*,g^*)}$, a pair of projection operators gives the weakest possible uncertainty relation in Theorem \ref{Theorem-uncertainty}. In other words, pairs of projection operators have the smallest mutual correlation between space- and frequency localization according to this uncertainty relation.  
\end{enumerate}
\end{remark}

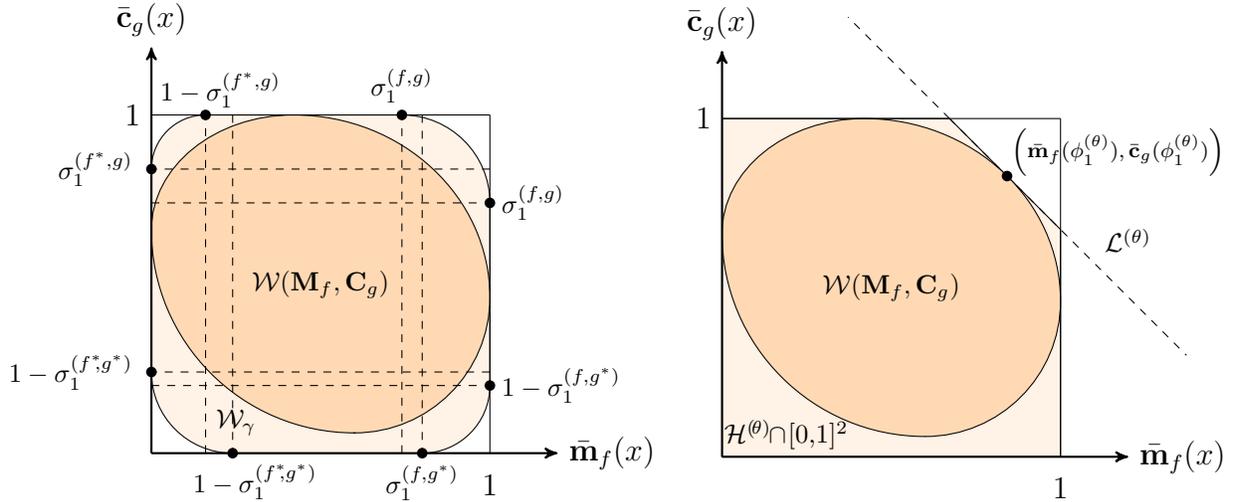
\begin{figure} 
\centering

\begin{tikzpicture}[scale=0.9]

\tikzset{
axis/.style={thick, ->, >=stealth'},
help lines/.style={dashed},
important line/.style={thick},
dot/.style={circle,fill=black,minimum size=4pt,inner sep=0pt,
            outer sep=-1pt}
}

    \def\xmax{3.7}
    \def\xmin{1.2}
    
    \def\ymax{4}
    \def\ymin{0.8}
    
    \filldraw [fill=orange!10!white] 
    (0,\xmin) -- 
    (0,\xmin) arc (180:270: \xmin cm and \xmin cm ) -- 
    (\ymax,0) --
    (\ymax, 0) arc (-90:0: 5 cm - \ymax cm and 5 cm -\ymax cm ) --
    (5,5-\ymax) -- 
    (5, \xmax) arc (0:90: 5 cm - \xmax cm and 5 cm -\xmax cm ) -- 
    (\xmax,5) -- 
    (\ymin,5) arc (90:180: \ymin cm and \ymin cm ) -- 
    (0,5-\ymin)-- cycle;
    
     \filldraw [fill=orange!30!white] 
    (3,0.3) arc (-90:0: 2 cm and 2 cm ) -- 
    (5,2.3) arc (0:90: 2.9 cm and 2.7 cm ) --
    (2.1,5) arc (90:180: 2.1 cm and 1.7 cm ) --
    (0,3.3) arc (180:270: 3 cm and 3 cm ) -- cycle;

    \draw[axis] (0,0)  --  (0,6) node(yline)[above] {$\meanc(x)$};
    \draw[axis] (0,0)  --  (6,0) node(xline)[right] {$\meanm(x)$};
    \draw (5,0) -- (5,5);
    \draw (0,5) -- (5,5);

    \draw (5,0)  node[label=below: $1$]{};
    \draw (0,5) node[left]{$1$};
    
    \draw (2.5,2.5) node {\small $\mathcal{W}(\mathbf{M}_f,\mathbf{C}_g)$};
    \draw (1.25,0.45) node {\small $\mathcal{W}_{\gamma}$};
    
    \small    
    \draw[help lines] (\xmax,0) -- (\xmax,5) node[dot,label=above:$\sigma_{1}^{(f,g)}$]{};
    \draw[help lines] (\xmin,5) -- (\xmin,0) node[dot]{};
    \draw (\xmin,0) node[label=below: $1-\sigma_1^{(f^*\!\!,g^*)}$,xshift=0.3cm,yshift=0.1cm]{};
    \draw[help lines] (5,\xmin) -- (0,\xmin) node[dot]{};
    \draw (0,\xmin) node[label=left: $1-\sigma_1^{(f^*\!\!,g^*)}$]{};
    \draw[help lines] (5,5-\ymin) -- (0,5-\ymin) node[dot]{};
    \draw (0,5-\ymin) node[label=left: $\sigma_1^{(f^*,g)}$]{};
    \draw[help lines] (\ymax,5) -- (\ymax,0) node[dot,label=below:$\sigma_1^{(f,g^*)}$]{};
    \draw[help lines] (\ymin,0) -- (\ymin,5) node[dot]{};
    \draw (\ymin,5) node[label=above: $1-\sigma_1^{(f^*,g)}$,xshift=0.2cm,yshift=-0.15cm]{};
    \draw[help lines] (0,\xmax) -- (5,\xmax) node[dot,label=right:$\sigma_{1}^{(f,g)}$]{};
    \draw[help lines] (0,5-\ymax) -- (5,5-\ymax) node[dot,label=right:$1-\sigma_1^{(f,g^*)}$]{};
     

\end{tikzpicture}
\begin{tikzpicture}[scale=0.9]

\tikzset{
axis/.style={thick, ->, >=stealth'},
help lines/.style={dashed},
important line/.style={thick},
dot/.style={circle,fill=black,minimum size=4pt,inner sep=0pt,
            outer sep=-1pt}
}

    \def\xmax{5-1.64}
    
    \filldraw [fill=orange!10!white] 
    (0,0) -- (5,0) --
    (5, \xmax)  -- 
    (\xmax,5) -- 
    (0,5) -- cycle;
    
     \filldraw [fill=orange!30!white] 
    (3,0.3) arc (-90:0: 2 cm and 2 cm ) -- 
    (5,2.3) arc (0:90: 2.9 cm and 2.7 cm ) --
    (2.1,5) arc (90:180: 2.1 cm and 1.7 cm ) --
    (0,3.3) arc (180:270: 3 cm and 3 cm ) -- cycle;

    \draw[axis] (0,0)  --  (0,6) node(yline)[above] {$\meanc(x)$};
    \draw[axis] (0,0)  --  (6,0) node(xline)[right] {$\meanm(x)$};
    \draw (5,0) -- (5,5);
    \draw (0,5) -- (5,5);
    
    \draw[help lines] (3.36 + 0, 5 + 0) -- (3.36 + 3.5, 5 - 3.5) {};
    \draw[help lines] (3.36 + 0, 5 + 0) -- (3.36 - 1.5, 5 + 1.5) {};

    \draw (5,0) node[label=below: $1$]{};
    \draw (0,5) node[left]{$1$};
    
    \draw (3.36 + 0.85, 5 - 0.85) node[dot]{};
    
    \draw (5.85, 4.5 ) node {\scriptsize $\left(\meanm(\phi_{1}^{(\theta)}),\meanc(\phi_{1}^{(\theta)})\right)$};

    \draw (2.5,2.5) node {\small $\mathcal{W}(\mathbf{M}_f,\mathbf{C}_g)$};
    \draw (0.95,0.33) node {\small $\mathcal{H}^{(\!\theta)}\!\! \cap \! [0,\!1]^2$};
    \draw (6,3.2) node {\small $\mathcal{L}^{(\theta)}$};

\end{tikzpicture}

\caption{Illustration of the uncertainty principles in Theorem \ref{Theorem-uncertainty} (left) and in Theorem \ref{cor-UP-rotatedoperator} (right).}

\label{Figure-uncertainty}

\end{figure}

\subsection{Do we always have uncertainty on graphs?}

In our uncertainty framework, the answer is no. Compared to the real line setting studied in \cite{LandauPollak1961}, the scenario $\sigma_{1} = 1$ is possible on some graphs, as pointed out it \cite{TBL2016}. This implies that in some cases there are signals $x$ satisfying $\meanm(x) = \meanc(x) = 1$, i.e. $x$ is perfectly localized in space and frequency simultaneously. Therefore, we can not expect that every pair of filters $(f,g)$ induces an uncertainty principle on $G$. Nevertheless, in a lot of cases the condition $\sigma_{1} < 1$ can be guaranteed. One of these conditions is the following:

\begin{proposition} \label{prop-sufficient} Let $f$ and $g$ be two filter functions on a graph $G$
satisfying \eqref{eq:propspacefreq}. If the maximal eigenvalue $1$ of $\mathbf{M}_f$ and $\mathbf{C}_g$ is simple, then $\sigma_1 = \|\Sop\| < 1$. 
\end{proposition}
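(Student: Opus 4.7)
The plan is proof by contradiction. Suppose that $\sigma_1 = \|\Sop\| = 1$. Since $\Sop$ is symmetric and positive semidefinite on the finite-dimensional space $\Rr^n$, there is a unit vector $x$ with $\langle \Sop x, x \rangle = 1$. Setting $y := \mathbf{C}_{\groot} x$, the identity $\Sop = \mathbf{C}_{\groot}\Mop\mathbf{C}_{\groot}$ and $\|\Mop\|=\|\Cop\|=1$ give the chain
\[1 = \langle \Mop y, y \rangle \;\le\; \|y\|^2 = \langle \Cop x, x\rangle \;\le\; \|x\|^2 = 1,\]
so each of the two inequalities must in fact be an equality.

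The next step is to exploit the simplicity of the top eigenvalue $1$ at both ends of this chain. The equality $\langle \Cop x, x\rangle = \|x\|^2$ places $x$ in the eigenspace of $\Cop$ for eigenvalue $1$, which by hypothesis is one-dimensional. Thus $x = \pm u_{k_0}$, where $k_0$ is the unique index with $\hat g_{k_0} = 1$. Similarly, $\langle \Mop y, y \rangle = \|y\|^2$ places the unit vector $y$ in the one-dimensional eigenspace of $\Mop$ for eigenvalue $1$, forcing $y = \pm e_{v_{i_0}}$, where $v_{i_0}$ is the unique node with $f(v_{i_0}) = 1$.

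Since $\mathbf{C}_{\groot}$ is the operator square root of $\Cop$ and shares its eigenvectors, $\mathbf{C}_{\groot} u_{k_0} = \hat g_{k_0}^{1/2} u_{k_0} = u_{k_0}$, and hence $y = \mathbf{C}_{\groot} x = \pm u_{k_0}$. Combining this with the previous paragraph gives $u_{k_0} = \pm e_{v_{i_0}}$, i.e.\ a graph Fourier basis vector is supported at a single node. Reading off the off-diagonal entries of the identity $\mathbf{L} u_{k_0} = \lambda_{k_0} u_{k_0}$ then yields $\mathbf{A}_{j,i_0}/\sqrt{d_j d_{i_0}} = 0$ for every $j \neq i_0$, which forces $v_{i_0}$ to be an isolated node. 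This is excluded by the very definition of the normalized Laplacian $\mathbf{L}$ and delivers the contradiction, so $\sigma_1 < 1$.

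The only subtle point is bookkeeping in the squeeze: the simplicity hypothesis on $\Cop$ is needed to pin down $x$ at the right end of the chain, while the simplicity hypothesis on $\Mop$ is needed to pin down $y$ at the left end, and the two conclusions are reconciled through the equality $\mathbf{C}_{\groot} u_{k_0} = u_{k_0}$. Once this is in place, the graph-theoretic step is immediate.
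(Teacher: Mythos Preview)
Your proof is correct and follows essentially the same approach as the paper's: argue by contradiction that equality $\sigma_1=1$ forces a single vector to lie simultaneously in the (one-dimensional) top eigenspaces of $\Mop$ and $\Cop$, hence to be both a standard basis vector and a column of $\mathbf{U}$, which is impossible since no vertex is isolated. The paper organizes the squeeze slightly differently---it invokes Property~\ref{res:sigma} to write $\sigma_1=\|\mathbf{C}_{\groot}\mathbf{M}_{\froot}\|^2$ and first pins down $x=\pm e_i$ from the $\Mop$ side before concluding $e_i$ is an eigenvector of $\Cop$---whereas you run the chain $1=\langle\Mop y,y\rangle\le\|y\|^2=\langle\Cop x,x\rangle\le 1$ and read off both ends at once; but the substance is identical.
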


\begin{proof}
Assume that $\sigma_1 = 1$, then by Property \ref{res:sigma} we can find a normalized signal $x$ such
that $1 = \| \mathbf{C}_{\groot} \mathbf{M}_{\froot} x \|$. As the spectral norms 
of $\mathbf{C}_{\groot}$ and $\mathbf{M}_{\froot}$ are one, this is only possible if $x$ is an
eigenvector of $\mathbf{M}_{\froot}$ with respect to the eigenvalue $1$. As $\mathbf{M}_{\froot}$ is a diagonal matrix and the eigenvalue $1$ is simple, we have $x = \pm e_i$  for some $i \in \{1, \ldots, n\}$, i.e. $x$ is up to the factor $\pm$ a canonical basis vector. Therefore, we get $1 = \| \mathbf{C}_{\groot} e_i \|$, i.e. $e_i$ is also an eigenvector of $\mathbf{C}_{\groot}$ with respect to the largest
eigenvalue $1$. As the eigenvalue $1$ of $\mathbf{C}_{\groot}$ is simple this implies that $e_i$ 
corresponds (up to a possible sign) to one of the columns of $\mathbf{U}$, that is, $e_i$ is an eigenvector of the normalized graph Laplacian $\mathbf{L}$. This on the other hand is not possible by the given structure of the adjacency matrix $\mathbf{A}$ as the vertex $v_i$ is connected by at least one edge to another vertex $v_j$. 
\end{proof}

For some graphs, the conditions of Proposition \ref{prop-sufficient} on the filters $f$ and $g$ can not be weakened in order to still guarantee $\sigma_1 < 1$. We give two counterexamples.

\begin{cexample}
\leavevmode
\begin{enumerate}[label = (\arabic*)]
\item (Bipartite graphs) We consider a bipartite graph $G$ with $4$ nodes $\{v_1,v_2,v_3,v_4\}$ and two undirected edges connecting $v_1$ with $v_2$ and $v_3$ with $v_4$. For this graph we obtain the graph Laplacian $\mathbf{L}$ and its spectral decomposition as 
\[ \mathbf{L} = \begin{pmatrix} 1 & - 1 & 0 & 0 \\ -1 & 1 & 0 & 0 \\ 0 & 0 & 1 & -1 \\ 0 & 0 & -1 & 1 \\ \end{pmatrix}, \quad \mathbf{U} = \frac1{\sqrt{2}} \begin{pmatrix} 1 & 0 & 1 & 0 \\ 1 & 0 & -1 & 0 \\ 0 & 1 & 0 & 1 \\ 0 & 1 & 0 & -1 \\ \end{pmatrix}, \quad  
\mathbf{M}_{\lambda} = \begin{pmatrix} 0 & 0 & 0 & 0 \\ 0 & 0 & 0 & 0 \\ 0 & 0 & 2 & 0 \\ 0 & 0 & 0 & 2 \\ \end{pmatrix}.\]
In particular $\lambda_1 = \lambda_2 = 0$ and $\lambda_3 = \lambda_4 = 2$ are double eigenvalues of $\mathbf{L}$ with a corresponding two-dimensional eigenspace. Now, if we choose the filters $f$ and $g$ as
\[ f = (1,0,1,0), \quad \hat{g} = (1,0,1,0),\]
we get
\begin{align*} 
& (\meanm(e_1),\meanc(e_1)) = (1,1), & (\meanm(e_2),\meanc(e_2)) = (0,1), \\
& (\meanm(e_3),\meanc(e_3)) = (1,0), & (\meanm(e_4),\meanc(e_4)) = (0,0).
\end{align*}
The convexity of the numerical range (established in Theorem \ref{th:hausdorff-toeplitz} below) therefore implies that $\mathcal{W}(\mathbf{M}_f, \mathbf{C}_g) = [0,1]^2$, i.e. we encounter no uncertainty in this example. 

In a similar way, we can check for the filters $f = (1,0,0,0)$, $\hat{g} = (1,0,1,0)$ or $f = (1,1,0,0)$, $\hat{g} = (1,0,0,0)$ that the right upper corner $(1,1)$ is contained in $\mathcal{W}(\mathbf{M}_f, \mathbf{C}_g)$, i.e., that $\sigma_1 = 1$. Therefore, in this example also the conditions of Proposition \ref{prop-sufficient} can not be weakened. Similar counterexamples can be constructed on larger bipartite graphs with an even number of nodes. 

\item (Complete graphs) We consider now a complete graph $G$ with $4$ nodes $\{v_1,v_2,v_3,v_4\}$ in which each node is connected to all other nodes by an undirected edge. For this graph we obtain the graph Laplacian $\mathbf{L}$ and its spectral decomposition as 
\[ \mathbf{L} = \frac13 \begin{pmatrix} 3 & - 1 & - 1 & -1 \\ -1 & 3 & -1 & -1 \\ -1 & -1 & 3 & -1 \\ -1 & -1 & -1 & 3 \\ \end{pmatrix}, \quad \mathbf{U} = \frac1{2} \begin{pmatrix} 1 & 1 & \sqrt{2} & 0 \\ 1 & 1 & -\sqrt{2} & 0 \\ 1 & -1 & 0 & \sqrt{2} \\ 1 & -1 & 0 & -\sqrt{2} \\ \end{pmatrix}, \quad  
\mathbf{M}_{\lambda} = \begin{pmatrix} 0 & 0 & 0 & 0 \\ 0 & 4/3 & 0 & 0 \\ 0 & 0 & 4/3 & 0 \\ 0 & 0 & 0 & 4/3 \\ \end{pmatrix}.\]
For the filter functions $f$ and $g$ given by
\[ f = (1,1,0,0), \quad \hat{g} = (0,0,1,0),\]
we get $(\meanm(u_3),\meanc(u_3)) = (1,1)$.
Thus, also in this example of a connected graph the right upper corner $(1,1)$ is contained in $\mathcal{W}(\mathbf{M}_f, \mathbf{C}_g)$ and $\sigma_1 = 1$. In this example it is therefore not possible to weaken the condition for the spatial filter $f$ in Proposition \ref{prop-sufficient}. Also on general complete graphs with $n \geq 3$ nodes a similar counterexample can be constructed. 
\end{enumerate}
\end{cexample}

\section{Computation of Uncertainty principles and the numerical range $\mathcal{W}(\mathbf{M}_f, \mathbf{C}_g)$ }
\label{sec:numericalrange}

For a normalized vector $x \in \Rr^n$ we have
\[ \meanm(x) + i \, \meanc(x) = x^{\intercal} \mathbf{M}_f x + i\, x^{\intercal} \mathbf{C}_g x
= x^{\intercal} (\mathbf{M}_f + i\ \mathbf{C}_g) x.\]
Thus, by identifying the complex numbers $\Cc$ with the plane $\Rr^2$ the admissibility region $\mathcal{W}(\mathbf{M}_f, \mathbf{C}_g)$ can be seen as a part of the numerical range
$\mathcal{W}(\mathbf{M}_f+ i \mathbf{C}_g)$ of the matrix $\mathbf{M}_f+ i \mathbf{C}_g$ given by
\[ \mathcal{W}(\mathbf{M}_f+ i \mathbf{C}_g) := \left\{ \bar{x}^{\intercal} (\mathbf{M}_f+ i \mathbf{C}_g) x \ | \ x \in \Cc^n, \, \|x\|=1 \right\}. \]
The definition of $\mathcal{W}(\mathbf{M}_f,\mathbf{C}_g)$ in \eqref{eq:numericalrange} and of $\mathcal{W}(\mathbf{M}_f+ i \mathbf{C}_g)$ are very similar, the only difference being that
$\mathcal{W}(\mathbf{M}_f+ i \mathbf{C}_g)$ is classically defined in terms of complex-valued vectors $x$. 
In Theorem \ref{th:hausdorff-toeplitz} below, we will see that the two sets coincide if the number of nodes is $n \geq 3$. For this reason, we call the admissibility region $\mathcal{W}(\mathbf{M}_f, \mathbf{C}_g)$ also the numerical range of the pair $(\mathbf{M}_f,\mathbf{C}_g)$. The deep link between the numerical range and uncertainty principles is pointed out in several works, among others in the original work \cite{LandauPollak1961} of Landau and Pollak and the subsequent study in \cite{Lenard1972}. In \cite{Klaja2016}, this link is used to derive uncertainty principles on an interval in terms of general spatial localization measures.

\subsection{Uncertainty principle related to the operator $\Rop$}
\label{sec:propertiesnumericalrange}
The correspondence of $\mathcal{W}(\mathbf{M}_f,\mathbf{C}_g)$ with $\mathcal{W}(\mathbf{M}_f+ i \mathbf{C}_g)$ is important for us, as we can use a broad arsenal of available results for $\mathcal{W}(\mathbf{M}_f+ i \mathbf{C}_g)$ to describe and approximate $\mathcal{W}(\mathbf{M}_f,\mathbf{C}_g)$. A second crucial property for our investigations is
the convexity of $\mathcal{W}(\mathbf{M}_f,\mathbf{C}_g)$. 

\begin{theorem}[Theorem 2.1, 2.2. and Remark 1 in \cite{Brickman}] \label{th:hausdorff-toeplitz} 
\leavevmode \newline
If $n \geq 3$, the set $\mathcal{W}(\mathbf{M}_f, \mathbf{C}_g)$ is convex, compact and corresponds to the numerical range $\mathcal{W}(\mathbf{M}_f+ i \mathbf{C}_g)$. In the case
$n = 2$, the set $\mathcal{W}(\mathbf{M}_f, \mathbf{C}_g)$ corresponds to the elliptical boundary
of $\mathcal{W}(\mathbf{M}_f+ i \mathbf{C}_g)$. 
\end{theorem}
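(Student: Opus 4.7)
The plan is to establish three things in order: compactness, the identification of $\mathcal{W}(\mathbf{M}_f, \mathbf{C}_g)$ with the complex numerical range $\mathcal{W}(\mathbf{M}_f + i\mathbf{C}_g)$ for $n\geq 3$, and convexity. Compactness is routine: the map $x \mapsto (x^{\intercal}\mathbf{M}_f x,\, x^{\intercal}\mathbf{C}_g x)$ is continuous on the compact unit sphere of $\Rr^n$, so its image $\mathcal{W}(\mathbf{M}_f, \mathbf{C}_g)$ is compact.

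For the identification, the inclusion $\mathcal{W}(\mathbf{M}_f, \mathbf{C}_g) \subseteq \mathcal{W}(\mathbf{M}_f + i\mathbf{C}_g)$ is immediate. For the opposite direction, I would write a complex unit vector as $x = u + iv$ with $u,v\in\Rr^n$ and $\|u\|^2 + \|v\|^2 = 1$. Since $\mathbf{M}_f$ and $\mathbf{C}_g$ are real and symmetric, the antisymmetric cross-terms $u^{\intercal}\mathbf{M}_f v - v^{\intercal}\mathbf{M}_f u$ (and the analogous one for $\mathbf{C}_g$) vanish, and a direct expansion yields
\[\bar{x}^{\intercal}(\mathbf{M}_f + i\mathbf{C}_g)x = \bigl(u^{\intercal}\mathbf{M}_f u + v^{\intercal}\mathbf{M}_f v\bigr) + i\bigl(u^{\intercal}\mathbf{C}_g u + v^{\intercal}\mathbf{C}_g v\bigr),\]
which (for $u,v\neq 0$) is precisely the convex combination with weights $\|u\|^2$ and $\|v\|^2$ of the two points in $\mathcal{W}(\mathbf{M}_f, \mathbf{C}_g)$ associated with the real unit vectors $u/\|u\|$ and $v/\|v\|$. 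Hence $\mathcal{W}(\mathbf{M}_f + i\mathbf{C}_g) \subseteq \mathrm{conv}(\mathcal{W}(\mathbf{M}_f, \mathbf{C}_g))$, and equality of the two sets follows as soon as convexity of the real range is established.

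The convexity for $n\geq 3$ is the genuine difficulty, and is Brickman's real-variable refinement of the Hausdorff-Toeplitz theorem. Given two unit vectors $x_1, x_2 \in \Rr^n$, the idea is to reduce to the three-dimensional subspace they span together with an orthogonal unit vector $x_3$ (available precisely because $n\geq 3$). Within any two-dimensional real subspace the parametrization $x(\phi) = \cos\phi\, y_1 + \sin\phi\, y_2$ yields, via the identities $\cos^2\phi = (1+\cos 2\phi)/2$ and $2\cos\phi\sin\phi = \sin 2\phi$, an image $(\meanm(x(\phi)),\meanc(x(\phi)))$ tracing an ellipse (possibly degenerate to a segment or a point). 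The essential topological input is that $S^{n-1}$ is simply connected for $n\geq 3$; combining this with a family of such elliptic arcs sweeping through the extra direction $x_3$ lets one fill in every interior point of the segment joining two arbitrary elements of $\mathcal{W}(\mathbf{M}_f, \mathbf{C}_g)$. The $n=2$ case is complementary: here $S^1$ is not simply connected, and the image of the single elliptic parametrization is exactly the ellipse itself, which the same computation (combined with the Elliptic Range Theorem for $2\times 2$ complex numerical ranges) identifies with the boundary of $\mathcal{W}(\mathbf{M}_f + i\mathbf{C}_g)$. The hard part is clearly the topological filling argument for $n\geq 3$; once it is granted, the remaining pieces assemble cleanly.
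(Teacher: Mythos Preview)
The paper does not give its own proof of this statement; it is stated with attribution to Brickman, and the subsequent remark only records where the ingredients come from: compactness via continuity of $x\mapsto(\meanm(x),\meanc(x))$ on the unit sphere, convexity of the real range $\mathcal{W}(\mathbf{M}_f,\mathbf{C}_g)$ established first (as in Brickman), and then the identification with the complex numerical range. Your sketch follows exactly this architecture, and your computation showing $\bar{x}^{\intercal}(\mathbf{M}_f+i\mathbf{C}_g)x$ lands in $\mathrm{conv}\,\mathcal{W}(\mathbf{M}_f,\mathbf{C}_g)$ for real symmetric $\mathbf{M}_f,\mathbf{C}_g$ is correct and is the standard reduction.

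One small comment on the convexity step: Brickman's original argument does not proceed via simple connectedness of $S^{n-1}$ and elliptic arcs, but rather by showing that the intersection of the image with any line is connected, which he reduces to the connectedness of the zero set $\{x\in S^{n-1}: x^{\intercal}\mathbf{B}x=0\}$ of a real quadratic form for $n\geq 3$. Your ``sweeping ellipses through an extra direction $x_3$'' picture is closer in spirit to the simplified treatment of Au-Yeung (also cited in the paper's remark). Both routes are valid and both hinge on a topological property of $S^{n-1}$ that fails for $n=2$, so your identification of where the difficulty lies and why the $n=2$ case degenerates to the elliptical boundary is accurate.
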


\begin{remark}
The convexity of $\mathcal{W}(\mathbf{M}_f+ i \mathbf{C}_g)$ is the well-known
Hausdorff-Toeplitz Theorem (cf. the original works \cite{Hausdorff1919,Toeplitz1918}
of Hausdorff and Toeplitz, proofs in english are given in \cite[Theorem 1.1-2]{GustafsonRao} or
\cite[Section 1.3]{HornJohnson1991}). Theorem \ref{th:hausdorff-toeplitz}, and, thus, the correspondence of the range $\mathcal{W}(\mathbf{M}_f, \mathbf{C}_g)$ with the classical numerical range $\mathcal{W}(\mathbf{M}_f+ i \mathbf{C}_g)$, is proven in \cite{Brickman}. Actually, 
in \cite{Brickman} this correspondence is shown by first proving the convexity of $\mathcal{W}(\mathbf{M}_f, \mathbf{C}_g)$. A simplified and unified proof for the convexity of the two sets is given in \cite{Au-Yeung1975}. In the exceptional case $n=2$, the set $\mathcal{W}(\mathbf{M}_f, \mathbf{C}_g)$ is an ellipse, a circle or a degenerate ellipse in form of a line segment or a point (cf. \cite{Brickman}). The compactness of $\mathcal{W}(\mathbf{M}_f, \mathbf{C}_g)$ follows from the fact that
$x \to (\meanm(x), \meanc(x))$ is a continuous mapping from the compact unit sphere in $\Rr^n$ onto $\mathcal{W}(\mathbf{M}_f, \mathbf{C}_g)$ (see also \cite[Theorem 5.1-1]{GustafsonRao}).
\end{remark}

Using the convexity of $\mathcal{W}(\mathbf{M}_f, \mathbf{C}_g)$, we derive now further properties that are useful for the formulation of an uncertainty principle as well as for the numerical computation of $\mathcal{W}(\mathbf{M}_f, \mathbf{C}_g)$. The following derivations can already be found in a similar form in the first works \cite{Hausdorff1919,Toeplitz1918} of Hausdorff and Toeplitz for the range $\mathcal{W}(\mathbf{M}_f+ i \mathbf{C}_g)$. The results regarding the approximation of the numerical range with polygons can be found in \cite{Johnson1978} or in \cite[Section 1.5]{HornJohnson1991}.

We first observe that for $(t,s) \in \mathcal{W}(\mathbf{M}_f, \mathbf{C}_g)$ the largest possible value of the coordinate $t$ is attained 
for a normalized eigenvector of $\mathbf{M}_f$ with respect to the largest eigenvalue. By our definition of the space-frequency operator $\mathbf{R}_{f,g}^{(\theta)}$, these are given as $\rho_{1}^{(0)}$ (the largest eigenvalue) and $\phi_{1}^{(0)}$ (a respective eigenvector) of the matrix $\mathbf{R}_{f,g}^{(0)} = \mathbf{M}_f$. In particular, we have 
\[ \rho_{1}^{(0)} = \phi_{1}^{(0)\intercal} \mathbf{M}_f \phi_{1}^{(0)} = \max_{t \in \Rr} \{ t \ | \ (t,s) \in \mathcal{W}(\mathbf{M}_f, \mathbf{C}_g) \}.\]
Therefore the vertical line $\mathcal{L}^{(0)} = \{ (\rho_{1}^{(0)},s) \ | \ s \in \Rr\}$
is a supporting hyperplane for the numerical range $\mathcal{W}(\mathbf{M}_f, \mathbf{C}_g)$ such that 
the half-plane $\{(t,s) \ | \ t \leq \rho_{1}^{(0)}\}$ contains $\mathcal{W}(\mathbf{M}_f, \mathbf{C}_g)$. Further, 
the point $(\phi_{1}^{(0)\intercal} \mathbf{M}_f \phi_{1}^{(0)}, \phi_{1}^{(0)\intercal} \mathbf{C}_g \phi_{1}^{(0)}) \in \mathcal{L}^{(0)}  \cap \mathcal{W}(\mathbf{M}_f, \mathbf{C}_g)$ is on the
boundary of $\mathcal{W}(\mathbf{M}_f, \mathbf{C}_g)$.

In a next step, we consider for $\theta \in [0,2\pi)$ the (clockwise oriented) rotation matrix 
\[ R^{(\theta)} := \begin{pmatrix} \cos \theta & \sin \theta \\ -\sin \theta & \cos \theta \end{pmatrix}.\]
The rotated numerical range  $R^{(\theta)} \, \mathcal{W}(\mathbf{M}_f, \mathbf{C}_g)$ can be written as
\[ R^{(\theta)} \, \mathcal{W}(\mathbf{M}_f, \mathbf{C}_g) = \mathcal{W}(\cos \theta \, \mathbf{M}_f + \sin \theta \, \mathbf{C}_g, - \sin \theta \, \mathbf{M}_f + \cos \theta \, \mathbf{C}_g).\]
Thus, by considering the largest eigenvalue $\rho_{1}^{(\theta)}$ of the symmetric matrix
$\Rop = \cos (\theta) \mathbf{M}_f + \sin (\theta) \mathbf{C}_g$, 
and a corresponding eigenvector $\phi_{1}^{(\theta)}$, the argument above implies that the line
\[\mathcal{L}^{(\theta)} := \{ \rho_{1}^{(\theta)} (\cos \theta, \sin \theta) + \tau(-\sin \theta, \cos \theta) \ | \ \tau \in \Rr\} = \{(t,s) \ | \ \cos (\theta) \, t + \sin (\theta) \, s = \rho_{1}^{(\theta)}\}\]
is a supporting hyperplane of $\mathcal{W}(\mathbf{M}_f, \mathbf{C}_g)$. In particular, $\mathcal{W}(\mathbf{M}_f, \mathbf{C}_g)$ is completely contained in the 
half-plane 
$$ \mathcal{H}^{(\theta)} := \{(t,s) \ | \ \cos (\theta) \, t + \sin (\theta) \, s \leq \rho_{1}^{(\theta)}\}$$ and the point 
$$ p^{(\theta)} := (\phi_{1}^{(\theta)\intercal} \mathbf{M}_f \phi_{1}^{(\theta)}, \phi_{1}^{(\theta)\intercal} \mathbf{C}_g \phi_{1}^{(\theta)}) \in \mathcal{L}^{(\theta)} \cap \mathcal{W}(\mathbf{M}_f, \mathbf{C}_g)$$
lies on the boundary of the numerical range. We summarize this argumentation line in the following uncertainty principle related to the operators $\Rop$ as well as in a characterization of the boundary 
curve of $\mathcal{W}(\mathbf{M}_f, \mathbf{C}_g)$. For the complex-valued numerical range $\mathcal{W}(\mathbf{M}_f + i \mathbf{C}_g)$ this result was originally given in \cite{Toeplitz1918}. We will use a formulation closer to the one given in \cite[Theorem 1 \& 2 \& 3]{Johnson1978}.

\begin{theorem}[Uncertainty principle related to $\Rop$] \label{cor-UP-rotatedoperator}
\leavevmode \newline
For every $0 \leq \theta < 2 \pi$, we have the inclusion
\[\mathcal{W}(\mathbf{M}_f, \mathbf{C}_g) \subseteq [0,1]^2 \cap \mathcal{H}^{(\theta)}, \]
in which the supporting line $\mathcal{L}^{(\theta)}$ intersects the boundary of $\mathcal{W}(\mathbf{M}_f, \mathbf{C}_g)$. On the other hand, for every point $p$ on the boundary of $\mathcal{W}(\mathbf{M}_f, \mathbf{C}_g)$ we have an angle $0 \leq \theta < 2 \pi$ such that $p \in \mathcal{L}^{(\theta)}$. For this angle, we get an eigenvector $\phi_1^{(\theta)}$ (not necessarily unique) corresponding to the largest eigenvalue $\rho_1^{(\theta)}$ of $\Rop$ such that
\[p = (\phi_{1}^{(\theta)\intercal} \mathbf{M}_f \phi_{1}^{(\theta)}, \phi_{1}^{(\theta)\intercal} \mathbf{C}_g \phi_{1}^{(\theta)}).\] 
\end{theorem}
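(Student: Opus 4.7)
The plan is to consolidate the observations already outlined in the discussion preceding the statement, combining the Rayleigh quotient characterization of the largest eigenvalue of the symmetric matrix $\Rop$ with the convexity of $\mathcal{W}(\mathbf{M}_f,\mathbf{C}_g)$ from Theorem \ref{th:hausdorff-toeplitz}. There are essentially three assertions to prove: the inclusion into $[0,1]^2 \cap \mathcal{H}^{(\theta)}$, the existence of a point of $\mathcal{L}^{(\theta)}$ on the boundary, and the converse statement that every boundary point is obtained this way.

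For the first inclusion, I would observe that $\mathcal{W}(\mathbf{M}_f,\mathbf{C}_g) \subseteq [0,1]^2$ follows directly from $0 \leq \Mop, \Cop \leq \mathbf{I}_n$, a consequence of the normalization \eqref{eq:propspacefreq}. For the half-plane $\mathcal{H}^{(\theta)}$, for any unit vector $x$ I would note that
\[
\cos(\theta)\,\meanm(x) + \sin(\theta)\,\meanc(x) \;=\; x^{\intercal}\Rop x \;\leq\; \rho_{1}^{(\theta)},
\]
by the Courant-Fischer/Rayleigh principle applied to the symmetric matrix $\Rop$. To show that $\mathcal{L}^{(\theta)}$ meets the boundary, I would pick a unit eigenvector $\phi_{1}^{(\theta)}$ of $\Rop$ with eigenvalue $\rho_{1}^{(\theta)}$ and check that the corresponding point $p^{(\theta)}=(\phi_{1}^{(\theta)\intercal}\Mop\phi_{1}^{(\theta)},\phi_{1}^{(\theta)\intercal}\Cop\phi_{1}^{(\theta)})$ satisfies $\cos(\theta)\,p_{1}^{(\theta)}+\sin(\theta)\,p_{2}^{(\theta)}=\rho_{1}^{(\theta)}$, placing it in $\mathcal{L}^{(\theta)}\cap \mathcal{W}(\mathbf{M}_f,\mathbf{C}_g)$. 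Since this point attains the maximum of the linear functional over the set, it must lie on the boundary.

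For the converse, I would take any boundary point $p$ of $\mathcal{W}(\mathbf{M}_f,\mathbf{C}_g)$ and apply the supporting hyperplane theorem, which is available because Theorem \ref{th:hausdorff-toeplitz} guarantees convexity and compactness for $n\geq 3$, while for $n=2$ the set is an ellipse (or degenerate ellipse) admitting a supporting line at every one of its points. A supporting line in $\Rr^2$ has a unit outward normal parametrizable as $(\cos\theta,\sin\theta)$ for some $\theta\in[0,2\pi)$, and the maximum of the associated linear functional over $\mathcal{W}(\mathbf{M}_f,\mathbf{C}_g)$ is then equal to $\rho_{1}^{(\theta)}$ by the first part. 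Hence $p\in\mathcal{L}^{(\theta)}$. Picking any unit vector $x$ realizing $p = (\meanm(x),\meanc(x))$, the identity $x^{\intercal}\Rop x = \rho_{1}^{(\theta)}$ together with the uniqueness of maximizers of the Rayleigh quotient (inside the top eigenspace) forces $x$ to be an eigenvector of $\Rop$ for $\rho_{1}^{(\theta)}$, so setting $\phi_{1}^{(\theta)}:=x$ yields the desired representation of $p$.

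I expect the only delicate points to be handling the case $n=2$ of Theorem \ref{th:hausdorff-toeplitz}, where $\mathcal{W}(\mathbf{M}_f,\mathbf{C}_g)$ is itself the boundary of a convex region (so "boundary of $\mathcal{W}(\mathbf{M}_f,\mathbf{C}_g)$" must be read as the set itself), and the non-uniqueness of $\phi_{1}^{(\theta)}$ when $\rho_{1}^{(\theta)}$ is a multiple eigenvalue, which explains the phrase \emph{not necessarily unique} in the statement. The remainder is a direct consolidation of the Rayleigh principle with the convex geometry already developed in Section \ref{sec:propertiesnumericalrange}.
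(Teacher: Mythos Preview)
Your proposal is correct and follows essentially the same approach as the paper: the argument given in Section~\ref{sec:propertiesnumericalrange} preceding the theorem is precisely the Rayleigh quotient bound combined with rotation to obtain the supporting half-plane and the boundary point $p^{(\theta)}$, and the remark following the theorem invokes convexity (Theorem~\ref{th:hausdorff-toeplitz}) for the converse, with the same separate treatment of $n=2$. Your write-up is slightly more explicit in deducing that the vector $x$ realizing a boundary point must lie in the top eigenspace of $\Rop$, but this is the same idea.
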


\begin{remark}
\leavevmode
\begin{enumerate}
\item The second statement of Theorem \ref{cor-UP-rotatedoperator} follows from the convexity of $\mathcal{W}(\mathbf{M}_f, \mathbf{C}_g)$ in the case $n \geq 3$. For $n = 2$, we use the fact that $\mathcal{W}(\mathbf{M}_f, \mathbf{C}_g)$ corresponds to the boundary of the convex numerical range $\mathcal{W}(\mathbf{M}_f + i \mathbf{C}_g)$. Both is guaranteed by Theorem \ref{th:hausdorff-toeplitz}. Theorem \ref{cor-UP-rotatedoperator} is illustrated graphically in Figure \ref{Figure-uncertainty} (right).
\item For $\theta = \pi/4$, Theorem \ref{cor-UP-rotatedoperator} implies that
\[ \sqrt{2} \rho_n^{(\pi/4)} \leq \meanm(x) + \meanc(x) \leq \sqrt{2} \rho_1^{(\pi/4)}.\]
Defining, as in Example \ref{sec:examples}.5 the space and frequency operators in the spectral domain instead of in the graph domain, we obtain similarly the inequalities
\[ \sqrt{2} \rho_n^{(\pi/4)} \leq \bar{\mathbf{m}}_{\hat{f}}(\hat{x}) + \bar{\mathbf{c}}_{\hat{g}}(\hat{x}) \leq \sqrt{2} \rho_1^{(\pi/4)},\]
in which $\rho_1^{(\pi/4)}$ and $\rho_n^{(\pi/4)}$ are the largest and the smallest eigenvalue
of the matrix $(\mathbf{M}_{\hat{f}} + \mathbf{C}_{\hat{g}})/\sqrt{2}$. Using the Laplace-Laplace filter described in Example \ref{sec:examples}.5 a variant of this inequality was formulated in \cite[Theorem 4.1]{BenedettoKoprowski2015} as an uncertainty principle on graphs.
\end{enumerate}
\end{remark}

\subsection{Approximation of the numerical range $\mathcal{W}(\mathbf{M}_f,\mathbf{C}_g)$ with polygons}
\label{sec:approximationnumericalrange}

We proceed now one step further and construct polygons based on a set $\Theta = \{\theta_1, \ldots \theta_{K}\} \subset [0,2 \pi)$ of $K \geq 3$ different angles to approximate the numerical range $\mathcal{W}(\mathbf{M}_f, \mathbf{C}_g)$ from the interior as well as from the exterior. Using the 
notation of Section \ref{sec:propertiesnumericalrange}, we define the two $K$-gons
\begin{align*}
\mathcal{P}_{\mathrm{out}}^{(\Theta)}(\mathbf{M}_f, \mathbf{C}_g) &:= \bigcap_{k = 1}^K \mathcal{H}^{(\theta)} \; = \; \bigcap_{k = 1}^K \left\{(t,s) \ | \ \cos (\theta_k) \, t + \sin (\theta_k) \, s \leq \rho_{1}^{(\theta_k)} \right\}, \\ 
\mathcal{P}_{\mathrm{in}}^{(\Theta)}(\mathbf{M}_f, \mathbf{C}_g) &:= \mathrm{conv} \{p^{(\theta_1)}, p^{(\theta_2)}, \ldots p^{(\theta_K)}\}.
\end{align*}
The convexity of the numerical range $\mathcal{W}(\mathbf{M}_f, \mathbf{C}_g)$ (for $n \geq 3$) combined with the statements of Theorem \ref{cor-UP-rotatedoperator} imply the following result.

\begin{theorem}[Theorem 4 in \cite{Johnson1978}] \label{Theorem-uncertaintyconvex}
Let $\Theta = \{\theta_1, \ldots \theta_{K}\} \subset [0,2 \pi)$ be a set of $K \geq 3$ different angles and $n \geq 3$.
Then, 
\[ \mathcal{P}_{\mathrm{in}}^{(\Theta)}(\mathbf{M}_f, \mathbf{C}_g) 
\subseteq \mathcal{W}(\mathbf{M}_f, \mathbf{C}_g) \subseteq 
\mathcal{P}_{\mathrm{out}}^{(\Theta)}(\mathbf{M}_f, \mathbf{C}_g).
\]
\end{theorem}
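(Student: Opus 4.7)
The plan is to deduce both inclusions directly from the two ingredients already assembled in the excerpt: the convexity and compactness of $\mathcal{W}(\mathbf{M}_f, \mathbf{C}_g)$ from Theorem \ref{th:hausdorff-toeplitz} (valid precisely for $n \geq 3$) and the supporting-hyperplane characterization from Theorem \ref{cor-UP-rotatedoperator}.

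For the outer inclusion, I would fix an arbitrary $\theta_k \in \Theta$ and invoke Theorem \ref{cor-UP-rotatedoperator}, which already gives $\mathcal{W}(\mathbf{M}_f, \mathbf{C}_g) \subseteq \mathcal{H}^{(\theta_k)}$, where $\mathcal{H}^{(\theta_k)} = \{(t,s) \mid \cos(\theta_k)\,t + \sin(\theta_k)\,s \leq \rho_1^{(\theta_k)}\}$. Intersecting this inclusion over all $k = 1, \ldots, K$ immediately yields
\[ \mathcal{W}(\mathbf{M}_f, \mathbf{C}_g) \;\subseteq\; \bigcap_{k=1}^K \mathcal{H}^{(\theta_k)} \;=\; \mathcal{P}_{\mathrm{out}}^{(\Theta)}(\mathbf{M}_f, \mathbf{C}_g). \]
This part is essentially automatic; no nontrivial step is involved since the intersection of containments is a containment in the intersection.

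For the inner inclusion, I would again apply the second part of Theorem \ref{cor-UP-rotatedoperator}, which asserts that each point $p^{(\theta_k)} = (\phi_1^{(\theta_k)\intercal}\mathbf{M}_f\phi_1^{(\theta_k)}, \phi_1^{(\theta_k)\intercal}\mathbf{C}_g\phi_1^{(\theta_k)})$ lies on the boundary of $\mathcal{W}(\mathbf{M}_f, \mathbf{C}_g)$ and in particular belongs to $\mathcal{W}(\mathbf{M}_f, \mathbf{C}_g)$. By the convexity guaranteed by Theorem \ref{th:hausdorff-toeplitz} under the hypothesis $n \geq 3$, the set $\mathcal{W}(\mathbf{M}_f, \mathbf{C}_g)$ must contain the convex hull of any finite collection of its own points, and hence
\[ \mathcal{P}_{\mathrm{in}}^{(\Theta)}(\mathbf{M}_f, \mathbf{C}_g) \;=\; \mathrm{conv}\{p^{(\theta_1)}, \ldots, p^{(\theta_K)}\} \;\subseteq\; \mathcal{W}(\mathbf{M}_f, \mathbf{C}_g). \]

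There is no real obstacle here: the entire content of the theorem is a bookkeeping consequence of what was established in Theorem \ref{cor-UP-rotatedoperator} (each $\mathcal{L}^{(\theta_k)}$ is a supporting hyperplane touched by $p^{(\theta_k)}$) together with the Hausdorff--Toeplitz-type convexity result. The only place where care is needed is the hypothesis $n \geq 3$, which is exactly what guarantees convexity of the real set $\mathcal{W}(\mathbf{M}_f, \mathbf{C}_g)$ itself; for $n = 2$ this set is merely the elliptical boundary of the complex numerical range and the inner polygon inclusion would fail in general, so the assumption cannot be dropped.
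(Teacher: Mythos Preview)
Your proposal is correct and follows exactly the approach indicated in the paper, which states just before the theorem that it is implied by the convexity of $\mathcal{W}(\mathbf{M}_f,\mathbf{C}_g)$ (for $n\geq 3$) combined with the statements of Theorem~\ref{cor-UP-rotatedoperator}. You have simply written out those two immediate deductions in full, including the relevant observation about why $n\geq 3$ is needed for the inner inclusion.
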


\begin{remark}
The vertices of the outer polygon $\mathcal{P}_{\mathrm{out}}^{(\Theta)}(\mathbf{M}_f, \mathbf{C}_g)$ can be calculated explicitly. A corresponding formula based on the eigenvalues 
$\rho_{1}^{(\theta_k)}$ is given in \cite{Johnson1978} and adapted to the notation of this article in equation \eqref{eq:outerpolygon}. Note that compared to \cite{Johnson1978}, the orientation of the rotation is reversed.  
\end{remark}

\subsection{Algorithm for the numerical approximation of the numerical range $\mathcal{W}(\mathbf{M}_f,\mathbf{C}_g)$}

Using the version of Theorem \ref{Theorem-uncertaintyconvex} for the range $\mathcal{W}(\mathbf{M}_f + i \mathbf{C}_g)$, two algorithms for the polygonal approximation of the convex set $\mathcal{W}(\mathbf{M}_f + i \mathbf{C}_g)$ (one from the interior, the other from the exterior) were derived in \cite{Johnson1978}. In this article, we can additionally exploit the symmetry of the matrices $\mathbf{M}_f$ and $\mathbf{C}_g$. The resulting purely real-valued method to obtain the polygonal approximations of $\mathcal{W}(\mathbf{M}_f,\mathbf{C}_g)$ is listed in Algorithm \ref{algorithm1}.

\begin{remark} 
\leavevmode
\begin{enumerate} 
\item In Algorithm $1$, we didn't specify a strategy for the selection of the angles $\theta_k$. Such strategies are studied in \cite{Rote1992} in which the resulting method for the approximation of an arbitrary convex set in $\Rr^2$ is called sandwich algorithm (as the boundary of the convex set is sandwiched by an inner and an outer polygon). In \cite{Rote1992}, it is shown that if an adaptive angle bisection is applied then the sandwich algorithm converges quadratically in the number of vertices $K$. 
\item In \cite{AgaskarLu2013}, the sandwich algorithm was applied to approximate a part of the boundary of 
$\mathcal{W}(\mathbf{M}_f, \mathbf{C}_g)$ (denoted as uncertainty curve) in case of the filter pair $(f,g)$ given in Section \ref{sec:examples} (4). Compared to Theorem \ref{cor-UP-rotatedoperator}, a slightly different characterization of the boundary points of $\mathcal{W}(\mathbf{M}_f, \mathbf{C}_g)$ was derived in \cite[Theorem 1]{AgaskarLu2013}. Namely, instead of a rotation angle $\theta$ a slope parameter $\alpha$ was used. Although the characterization with a slope parameter $\alpha$ is elegant, it has the slight disadvantage that the entire boundary of $\mathcal{W}(\mathbf{M}_f, \mathbf{C}_g)$ can not be described with a single parametrization. 
\end{enumerate} 
\end{remark}

\begin{algorithm}[H] \label{algorithm1}

\caption{Calculation of interior and exterior approximations to $\mathcal{W}(\mathbf{M}_f,\mathbf{C}_g)$}

\begin{multicols}{2}

\vspace{4mm}

\KwIn{The matrices $\mathbf{M}_f$, $\mathbf{C}_g$, 
      the angles $0 \leq\! \theta_1\!< \!\theta_2\!< \!\cdots\! < \!\theta_K\! <  2 \pi$, with $K \geq 3$. Set $\theta_{0} = \theta_K$.   
}

\vspace{2mm}

\For{$k \in \{ 1,2,\ldots, K\}$}{
  Create $\mathbf{R}_{f,g}^{(\theta_k)} = \cos (\theta_k) \mathbf{M}_f + \sin (\theta_k) \mathbf{C}_g$ \;
  
  Calculate normalized eigenvector $\phi_{1}^{(\theta_k)}$ for the maximal eigenvalue $\rho_{1}^{(\theta_k)}$\;
  
  Create the boundary point $$p^{(\theta_k)} = \left(\phi_{1}^{(\theta_k)\intercal} \mathbf{M}_f \phi_{1}^{(\theta_k)}, \phi_{1}^{(\theta_k)\intercal} \mathbf{C}_g \phi_{1}^{(\theta_k)}\right).$$      
  }
\textbf{Generate} the interior polygon $$\mathcal{P}_{\mathrm{in}}^{(\Theta)}(\mathbf{M}_f, \mathbf{C}_g) = \mathrm{conv} \{p^{(\theta_1)}, \ldots p^{(\theta_K)}\}$$ as an approximation to $\mathcal{W}(\mathbf{M}_f,\mathbf{C}_g)$.

\vspace{2mm}

\For{$k \in \{ 1,2,\ldots, K\}$}{
Create the outer vertex $q^{(\theta_k)}$ as
\begin{equation} \label{eq:outerpolygon}
\hspace{-4mm} \scriptsize q^{(\theta_k)} = R^{(-\theta_k)} \left(\rho_{1}^{(\theta_k)},
\frac{\rho_{1}^{(\theta_k)}\cos(\theta_{k}-\theta_{k-1})-\rho_{1}^{(\theta_{k-1})}}{\sin(\theta_{k}-\theta_{k-1})} \right).
\end{equation}    
  }

\textbf{Generate} $\mathcal{P}_{\mathrm{out}}^{(\Theta)}(\mathbf{M}_f, \mathbf{C}_g) = \mathrm{conv} \{q^{(\theta_1)}, \ldots q^{(\theta_K)}\}$ as a polygon exterior to $\mathcal{W}(\mathbf{M}_f,\mathbf{C}_g)$.

\par\medskip 

\begin{minipage}{\linewidth} \centering
\includegraphics[width= 0.7\linewidth]{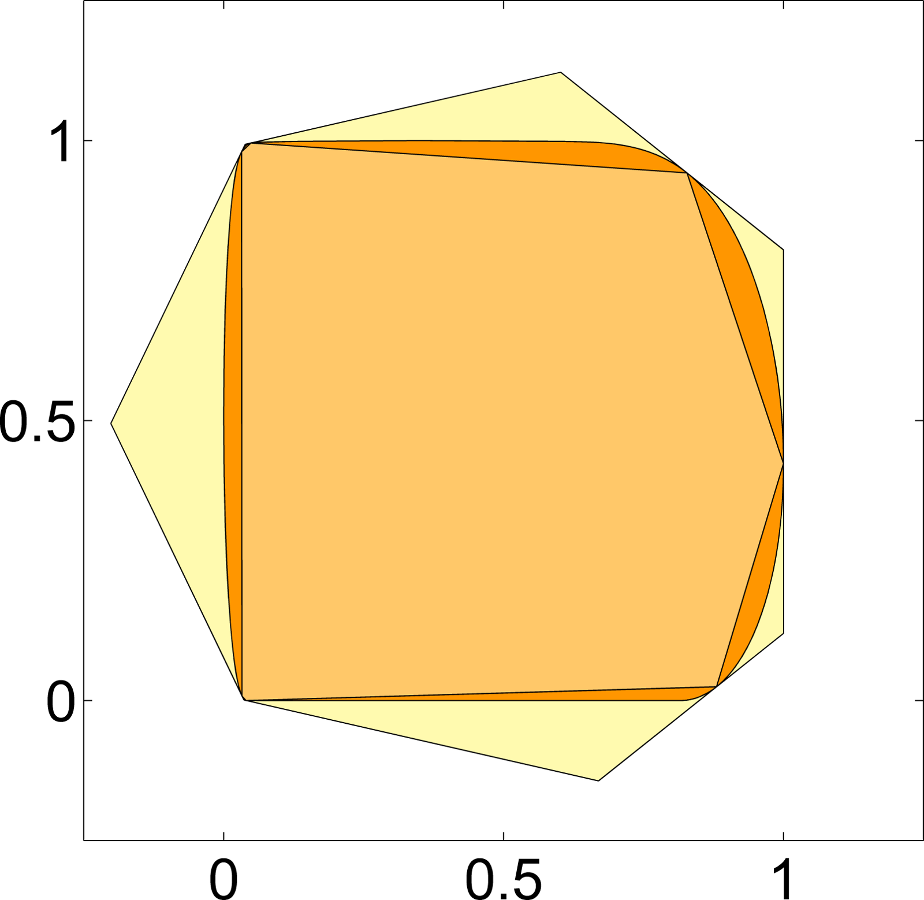}
\end{minipage}
\par\medskip 
Fig. Alg. 1: Interior and exterior approximation of the numerical range $\mathcal{W}(\mathbf{M}_f,\mathbf{C}_g)$ based on Algorithm 1 using an interior and an exterior polygon with $K = 7$ vertices.

\end{multicols}

\end{algorithm}

\section{Error estimates for space-frequency localized signals} \label{sec:errorestimates}

The orthogonal basis of eigenvectors $\{\psi_1, \ldots, \psi_{n}\}$ and $\{\phi_1^{(\theta)}, \ldots, \phi_{n}^{(\theta)}\}$ of the matrices $\Sop$ and $\Rop$ are natural candidates to decompose a signal $x$ on $G$ into single space-frequency components. In particular, we can expand every signal $x$ as
\[x = \sum_{k = 1}^n (\psi_k^{\intercal} x)\,  \psi_k \quad \text{and} \quad x = \sum_{k = 1}^n (\phi_k^{(\theta)\intercal} x) \, \phi_k^{(\theta)},\]
with the coefficients $\psi_k^{\intercal} x$ and $\phi_k^{(\theta)\intercal} x$ giving information about the space-frequency localization of $x$. If the signal $x$ itself is space-frequency localized with respect to the operators $\Sop$ or $\Rop$, or if the variance terms 
\begin{align*}
\vvar[\Sop](x) & :=  \frac{x^{\intercal}(\mathbf{S}_{f,g} - \means(x))^2 x}{\|x\|^2}, \quad \vvar[\Rop](x) := \frac{x^{\intercal}(\Rop - \meanr(x))^2 x}{\|x\|^2} 
\end{align*}
are small, we can approximate the signal $x$ well with only a few eigenvectors. This is specified in the following result. 

\begin{theorem} \label{theorem-localizedapproximation}
Let $s < \sigma_1$ and $r < \rho_1^{(\theta)}$. For a signal $x$ on $G$, we have the inequalities
\begin{align}
& \left\| x - \sum_{k:\, \sigma_k \geq s} ( \psi_k^{\intercal} x ) \psi_k \right\|^2 \leq \frac{\sigma_1-\means(x)}{\sigma_1 - s}
\|x\|^2, \quad \left\| x - \sum_{k:\, \rho_k^{(\theta)} \geq r} ( \phi_k^{(\theta)\intercal} x ) \phi_k^{(\theta)}\right\|^2 \leq \frac{\rho_1^{(\theta)}-\meanr(x)}{\rho_1^{(\theta)} - r}
\|x\|^2.\label{equation-errorbound1}
\end{align}
Further, for $a > 0$, define the intervals $I_{\mathbf{s},a} = [\means(x)-a, \means(x) + a]$, and $I_{\mathbf{r},a} = [\meanr(x)-a, \meanr(x) + a]$. Then, we get the error bounds
\begin{align}
& \left\| x - \sum_{k: \, \sigma_k \in I_{\mathbf{s},a}} ( \psi_k^{\intercal} x ) \psi_k\right\|^2 \leq \frac{\vvar[\Sop](x)}{a^2} \|x\|^2, \quad \left\| x - \sum_{k: \, \rho_k^{(\theta)} \in I_{\mathbf{r},a}} ( \phi_k^{(\theta)\intercal} x ) \phi_k^{(\theta)}\right\|^2 \leq \frac{\vvar[\Rop](x)}{a^2} \|x\|^2.  \label{equation-errorbound2}
\end{align}
\end{theorem}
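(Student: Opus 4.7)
My plan is to treat the two operators $\Sop$ and $\Rop$ in parallel, using the same mechanism in both cases, and to derive the first pair of bounds by an elementary energy balance and the second pair as a Chebyshev-type estimate. All computations take place in the eigenbasis, so the key is to rewrite every quantity involved as a weighted sum over eigen-coefficients.

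First I expand $x$ in the orthonormal eigenbasis $\{\psi_k\}$ of $\Sop$ (and analogously in $\{\phi_k^{(\theta)}\}$ of $\Rop$). Writing $c_k = \psi_k^{\intercal} x$ I get $\|x\|^2 = \sum_k c_k^2$ and $\means(x)\|x\|^2 = \sum_k \sigma_k c_k^2$. The truncation error on the left-hand side of \eqref{equation-errorbound1} becomes simply $E := \sum_{k:\sigma_k < s} c_k^2$. To prove the bound I split the sum $\means(x)\|x\|^2 = \sum_k \sigma_k c_k^2$ over the two index sets $\{k: \sigma_k \geq s\}$ and $\{k:\sigma_k < s\}$, bound $\sigma_k$ from above by $\sigma_1$ on the first set and by $s$ on the second (using the strict inequality $s<\sigma_1$ only for signs at the end). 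This yields
\[
\means(x)\|x\|^2 \leq \sigma_1 (\|x\|^2 - E) + s\,E = \sigma_1\|x\|^2 - (\sigma_1-s)E,
\]
which rearranges to $E \leq \frac{\sigma_1 - \means(x)}{\sigma_1 - s}\|x\|^2$, exactly the first estimate of \eqref{equation-errorbound1}. The corresponding estimate for $\Rop$ is proved in the same way by replacing $\sigma_k$, $\psi_k$, $\means$ with $\rho_k^{(\theta)}$, $\phi_k^{(\theta)}$, $\meanr$.

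For \eqref{equation-errorbound2} I apply a Chebyshev-type argument. The truncation error is now $E' := \sum_{k:\,\sigma_k \notin I_{\mathbf{s},a}} c_k^2$, i.e.\ the energy at those indices for which $|\sigma_k - \means(x)| > a$. On each such index, $(\sigma_k - \means(x))^2 > a^2$, so
\[
a^2 E' \leq \sum_{k:\, \sigma_k \notin I_{\mathbf{s},a}} (\sigma_k - \means(x))^2 c_k^2 \leq \sum_k (\sigma_k - \means(x))^2 c_k^2 = \vvar[\Sop](x)\,\|x\|^2,
\]
using that $(\Sop - \means(x))^2 = \Psi(\mathbf{M}_\sigma - \means(x))^2\Psi^{\intercal}$ in the last equality. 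Dividing by $a^2$ gives the desired bound, and the estimate for $\Rop$ is obtained by the analogous computation in the eigenbasis $\{\phi_k^{(\theta)}\}$.

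I do not anticipate a real obstacle: both bounds are of Markov/Chebyshev flavour once everything is diagonalized, and the spectral theorem applied to the symmetric operators $\Sop$ and $\Rop$ provides the required orthonormal bases without further conditions. The only point to be careful with is that in the first estimate one uses $\sigma_1$ (respectively $\rho_1^{(\theta)}$) as the uniform upper bound on the \emph{retained} spectrum, so that the denominator $\sigma_1 - s$ (respectively $\rho_1^{(\theta)} - r$) is positive, which is precisely the hypothesis $s<\sigma_1$ and $r<\rho_1^{(\theta)}$.
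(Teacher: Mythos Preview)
Your proposal is correct and follows essentially the same approach as the paper: both arguments diagonalize in the orthonormal eigenbasis, identify the truncation error with the tail energy $\sum_{k:\,\sigma_k<s}(\psi_k^{\intercal}x)^2$, and then apply a Markov-type estimate for \eqref{equation-errorbound1} and a Chebyshev-type estimate for \eqref{equation-errorbound2}. The only cosmetic difference is that the paper writes the Markov step as multiplying by the factor $(\sigma_1-\sigma_k)/(\sigma_1-s)\ge 1$ on the tail and then extending the sum to all $k$, whereas you obtain the same inequality by splitting $\sum_k\sigma_k c_k^2$ and bounding $\sigma_k$ above by $\sigma_1$ and $s$ on the two index sets; these are algebraically equivalent.
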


\begin{proof}
We provide the proof only for the space-frequency analysis related to the operator $\Sop$. For
$\Rop$ the argumentation line is identical. 

For a signal $x$ on $G$, the orthonormality of the eigenbasis $\{\psi_1, \ldots \psi_n\}$ gives
\begin{align*}
& \left\| x - \sum_{k: \, \sigma_k \geq s} ( \psi_k^{\intercal} x ) \psi_k\right\|^2 = 
\sum_{k:\, \sigma_k < s} ( \psi_k^{\intercal} x )^2 \leq
\frac1{\sigma_1-s} \sum_{k:\, \sigma_k < s} ( \psi_k^{\intercal} x )^2 (\sigma_1-\sigma_k) \leq 
\frac1{\sigma_1-s} \sum_{k=1}^n ( \psi_k^{\intercal} x )^2 (\sigma_1-\sigma_k)
\end{align*}
Since, $\|x\|^2 = \sum_{k=1}^n ( \psi_k^{\intercal} x )^2$ (Pythagoras) and
 $\sum_{k = 1}^{n} \sigma_k ( \psi_k^{\intercal} x )^2 = \means(x) \|x\|^2$ (spectral decomposition of 
 $\mathbf{S}_{f,g}$), we get the inequality \eqref{equation-errorbound1}. 
Similarly, we can prove the bound in \eqref{equation-errorbound2}. Namely, we have
\begin{align*}
&  \left\| x - \sum_{k: \, \sigma_k \in I_{\mathbf{s},a}} ( \psi_k^{\intercal} x ) \psi_k\right\|^2
= \sum_{k:\,\sigma_k \in \Rr \setminus I_{\mathbf{s},a}} ( \psi_k^{\intercal} x )^2 
 \leq \frac{1}{a^2} \sum_{k:\,\sigma_k \in \Rr \setminus I_{\mathbf{s},a}} ( \psi_k^{\intercal} x )^2 (\means(x) - \sigma_k)^2 \\
& \quad \leq \frac{1}{a^2} \sum_{k = 1}^{n} ( \psi_k^{\intercal} x )^2 (\means(x) - \sigma_k)^2 = \frac{\vvar[\Sop](x)}{a^2} \|x\|^2.
\end{align*}
This completes the proof of \eqref{equation-errorbound2} for the operator $\Sop$.
\end{proof}

\begin{remark}
For a normalized signal $x$ on $G$ with $\|x\|=1$, the vector $\mu(x) = 
(\mu_1(x), \ldots, \mu_n(x))$ given by $\mu_k(x) = (\psi_k^{\intercal} x )^2$ can be considered
as a probability distribution on the spectrum of $\mathbf{S}_{f,g}$ (similarly also for the operator
$\Rop$). The two inequalities \eqref{equation-errorbound1} and \eqref{equation-errorbound2} stated in Theorem \ref{theorem-localizedapproximation} can therefore be seen as variants of the Markov and the Chebyshev inequality for a $\mu(x)$-distributed random variable, (see \cite[p. 114]{Papoulis}). For orthogonal polynomials on the interval $[-1,1]$, similar error estimates were derived in \cite{erb2013}.
\end{remark}

\section{Shapes of uncertainty - Examples and Illustrations} \label{sec:shapeexamples}

As a final part of this work, we want to study and illustrate the uncertainty regions for concrete filter pairs $(f,g)$. Further, we want to analyze the effects of the different filter pairs on the space-frequency localization on graphs. For this, we conduct several numerical experiments on two explicit graphs.

\subsection{Experimental setup for graphs and filters}

\begin{figure}[htbp]
	\centering
	\includegraphics[width= 1\textwidth]{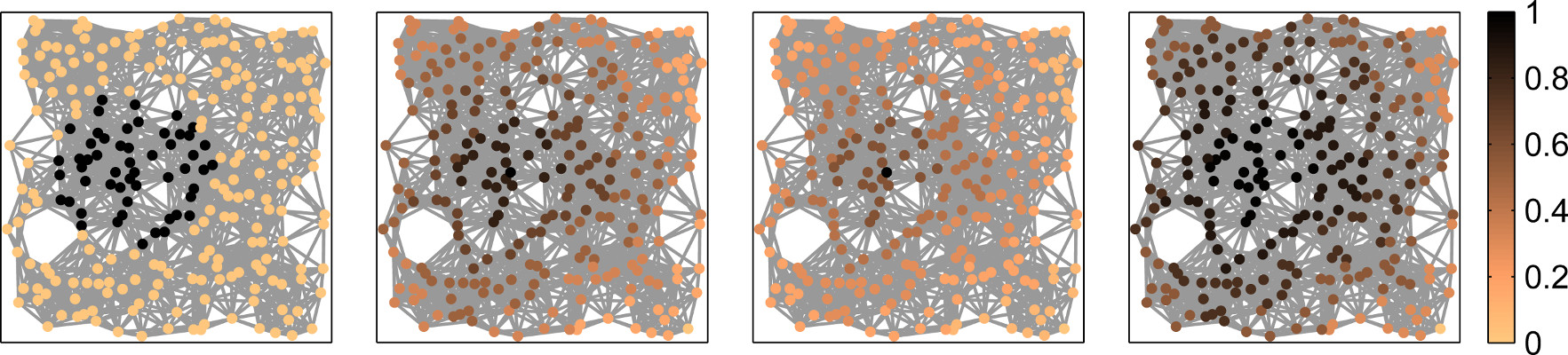}
	\caption{Experimental setup on the sensor network $G_1$. The spatial filters $f_1$, $f_2$, $f_3$ and $f_4$ described in Section \ref{sec:experimentfilters} are plotted from left to right.}
	\label{fig:2}
\end{figure}

\subsubsection{The graphs}
As undirected and unweighted test graphs, we consider point clouds in $\Rr^2$ in which two nodes $v_1$ and $v_2$ get connected if the euclidean distance satisfies $|v_1 - v_2| \leq R$ for some chosen radius $R > 0$. In particular, we study the following two settings:
\begin{enumerate}[label= (\arabic*)]
\item $G_1$ is a sensor network with $n_1 = 253$ random nodes in the square $[0,1]^2$. With the radius $R = 1/6$, we obtain a graph with $2369$ edges. $G_1$ is illustrated in Figure \ref{fig:2}.
\item The node set of $G_2$ is a reduced point cloud taken from the Stanford bunny (Source: Stanford University Computer Graphics Laboratory). It contains $n_2 = 900$ nodes projected in the $xy$-plane. Choosing as radius $R = 0.01$ we obtain the graph $G_2$ with $7325$ edges. The Stanford bunny $G_2$ is illustrated in Figure \ref{fig:3}.  
\end{enumerate}  

\begin{figure}[htbp]
	\centering
	\includegraphics[width= 1\textwidth]{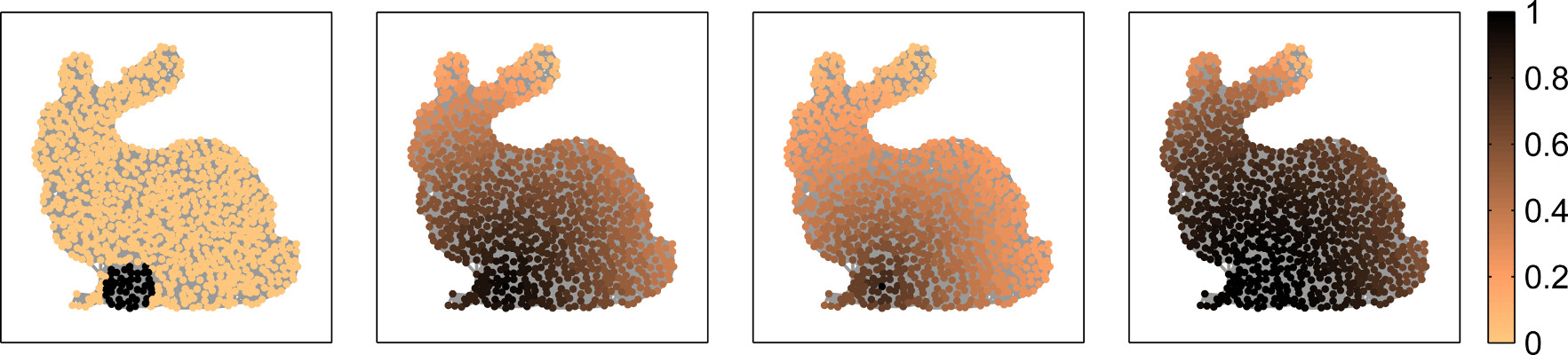}
	\caption{Experimental setup on the Stanford bunny $G_2$. The spatial filters $f_1$, $f_2$, $f_3$ and $f_4$ described in Section \ref{sec:experimentfilters} are plotted from left to right.}
	\label{fig:3}
\end{figure}

\subsubsection{The space and frequency filters} \label{sec:experimentfilters}
We test four different filter pairs:
\begin{enumerate}[label= (\arabic*)]
\item $(f_1,g_1)$ is a \emph{projection-projection} pair as described in Section \ref{sec:examples} (1). It corresponds to the space-frequency setting studied in \cite{TBL2016}. For the spatial filter $f_1 = \chi_{\mathcal{A}}$, we choose the circular set $\mathcal{A} = \{v \in V \ | \ |v - w| \leq r\}$, i.e. $\mathcal{A}$ consists of all nodes of the point cloud $V$ that are within an euclidean distance $r$ to the central node $w$. The matrix $\mathbf{M}_{f_1}$ is then the orthogonal projection onto the 
signals supported in $\mathcal{A}$. For $G_1$, we choose $r = 0.25$, for the bunny $G_2$ we take $r = 0.015$. 

In the spectral domain, we use the filter $\hat{g} = \chi_{\mathcal{B}}$ with $\mathcal{B} = \{u_1, \ldots, u_N\} \subset \hat{G}$ and $N < n$, i.e., $\mathbf{C}_{g_1}$ is the 
orthogonal projection onto the bandlimited signals spanned by the basis $\mathcal{B}$. For the graph $G_1$, we use as bandwidth $N = 100$, for the bunny $G_2$ we take $N = 200$.

\item $(f_2,g_2)$ is a \emph{distance-projection} pair as defined in Section \ref{sec:examples} (2). The spatial filter $f_2$ is defined as
$f_2 = 1 - \distw /\distwmax$, where $\distw(v)$ is the number of edges of the shortest path connecting $w$ with $v \in V$. To compare $f_2$ with $f_1$, we use for both filters the same central node $w$. Further, $g_2$ coincides with the projection filter $g_1 = \chi_{\mathcal{B}}$ described above in (1). 
\item $(f_3,g_3)$ is a \emph{modified distance-projection} pair from Section \ref{sec:examples} (3). The two filters $f$ and $g$ are given for $\alpha > 0$, $\beta > 0$, as 
\begin{equation*} 
f_3 = 1 - \ts \left(\frac{\distw}{\distwmax}\right)^{\alpha}, \quad \hat{g}_3 = \chi_{\mathcal{B}} \odot \left( 1 - \ts \left(\frac{\lambda}{2}\right)^{\beta}\right).
\end{equation*}
Here, $x^{\alpha}$ is defined as $x^{\alpha} = (x_1^\alpha, \ldots, x_n^\alpha )$. The set $\mathcal{B}$ is the same as for the filters $g_1 = g_2$. In our experiments we choose $\alpha = 1/2$ and $\beta = 2$. 
\item $(f_4,g_4)$ is the \emph{distance-Laplace} pair discussed in Section \ref{sec:examples} (4) and a variant of the uncertainty setting studied in \cite{AgaskarLu2013}. This pair is given as 
\[f_4 = 1 - \ts \left(\frac{\distw}{\distwmax}\right)^{2}, \quad \hat{g}_4 = 1 - \lambda/2. \]
In particular, the spatial filter $f_4$ corresponds to the filter $f_3$ with the parameter $\alpha = 2$.  
\end{enumerate}

\subsection{Shapes of uncertainty and space-frequency localization of eigenvectors}

\subsubsection{Description}
As a first experiment, we apply Algorithm 1 and plot the numerical ranges $\mathcal{W}(\mathbf{M}_f,\mathbf{C}_g)$ of the four filter pairs $(f_1,g_1)$, $(f_2,g_2)$, $(f_3,g_3)$, and $(f_4,g_4)$ on 
the two test graphs $G_1$ and $G_2$. Further, we calculate the space-frequency localization 
of the eigenvectors of the matrices $\Sop$ and $\Rop$, $\theta = 9 \pi /20$, inside $\mathcal{W}(\mathbf{M}_f,\mathbf{C}_g)$. 
The corresponding results are illustrated in Figure \ref{fig:shape1}
and Figure \ref{fig:shape2}. As an additional analysis tool, we display in Figure \ref{fig:decaysigma} the decay of the eigenvalues of $\Sop$ and $\Rop$.

\begin{figure}[htbp]
	\centering
	\includegraphics[width= 1\textwidth]{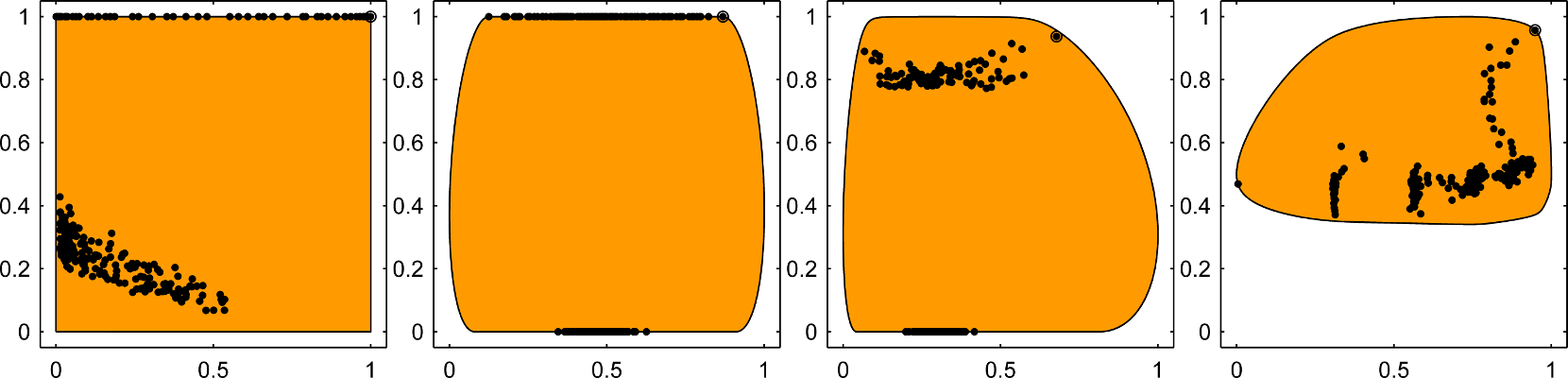}
	\caption{The numerical range $\mathcal{W}(\mathbf{M}_f,\mathbf{C}_g)$ for the filter pairs $(f_1,g_1)$, $(f_2,g_2)$, $(f_3,g_3)$, $(f_4,g_4)$ on the sensor graph $G_1$ (from left to right).
	The black dots represent the position $\left(\meanm(\psi_{k}),\meanc(\psi_{k})\right)$ of the eigenvectors of the operator $\Sop$. The ringed black dot indicates the position of $\psi_{1}$.}
	\label{fig:shape1}
\end{figure}

\begin{figure}[htbp]
	\centering
	\includegraphics[width= 1\textwidth]{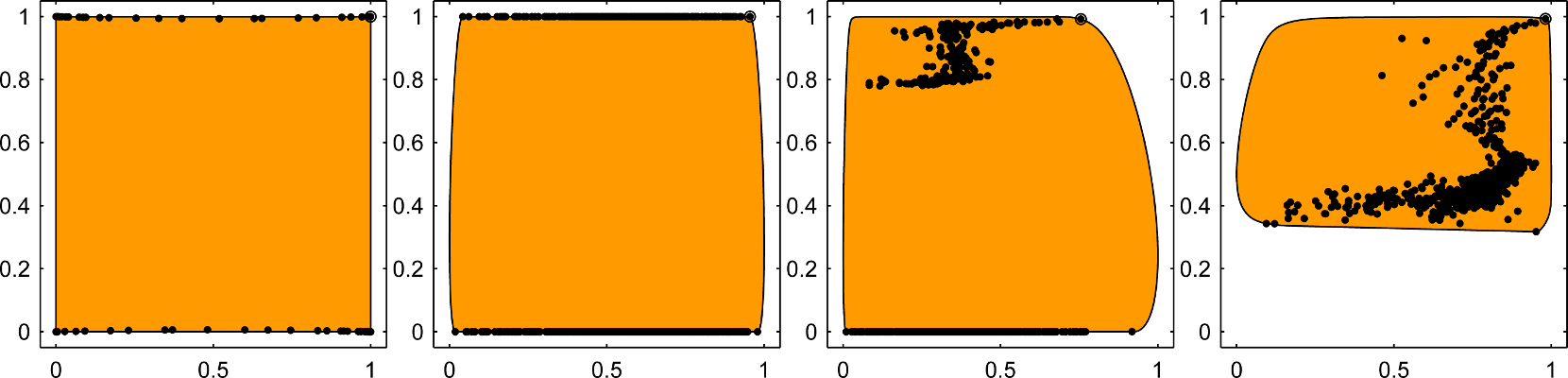}
	\caption{Comparison of $\mathcal{W}(\mathbf{M}_f,\mathbf{C}_g)$ for the filter pairs $(f_1,g_1)$, $(f_2,g_2)$, $(f_3,g_3)$, $(f_4,g_4)$ on the graph $G_2$ (from left to right).
	The black dots represent the location $\left(\meanm(\phi_{k}^{(\theta)}),\meanc(\phi_{k}^{(\theta)})\right)$ of the eigenvectors of the operator $\Rop$ with $\theta = 9 \pi /20$. The ringed black dot indicates the position of $\phi_{1}^{(\theta)}$.}
	\label{fig:shape2}
\end{figure}

\begin{figure}[htbp]
	\centering
	\includegraphics[width= 0.33\textwidth]{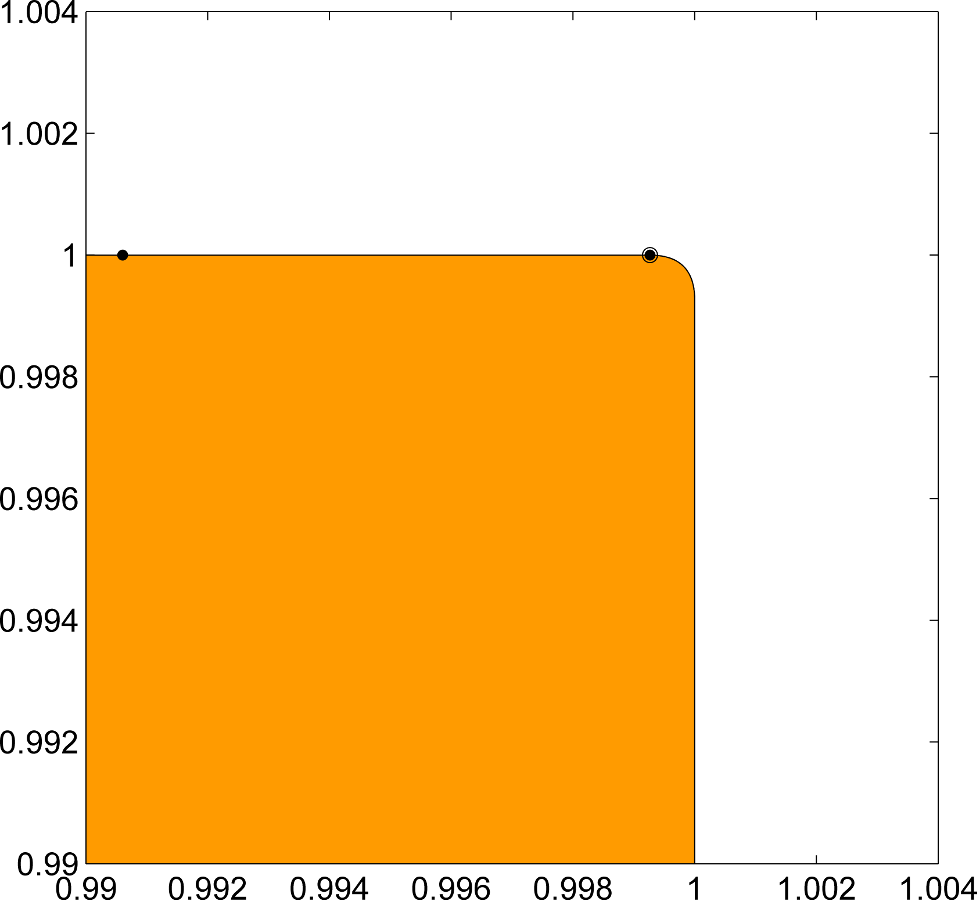}
	\includegraphics[width= 0.32\textwidth]{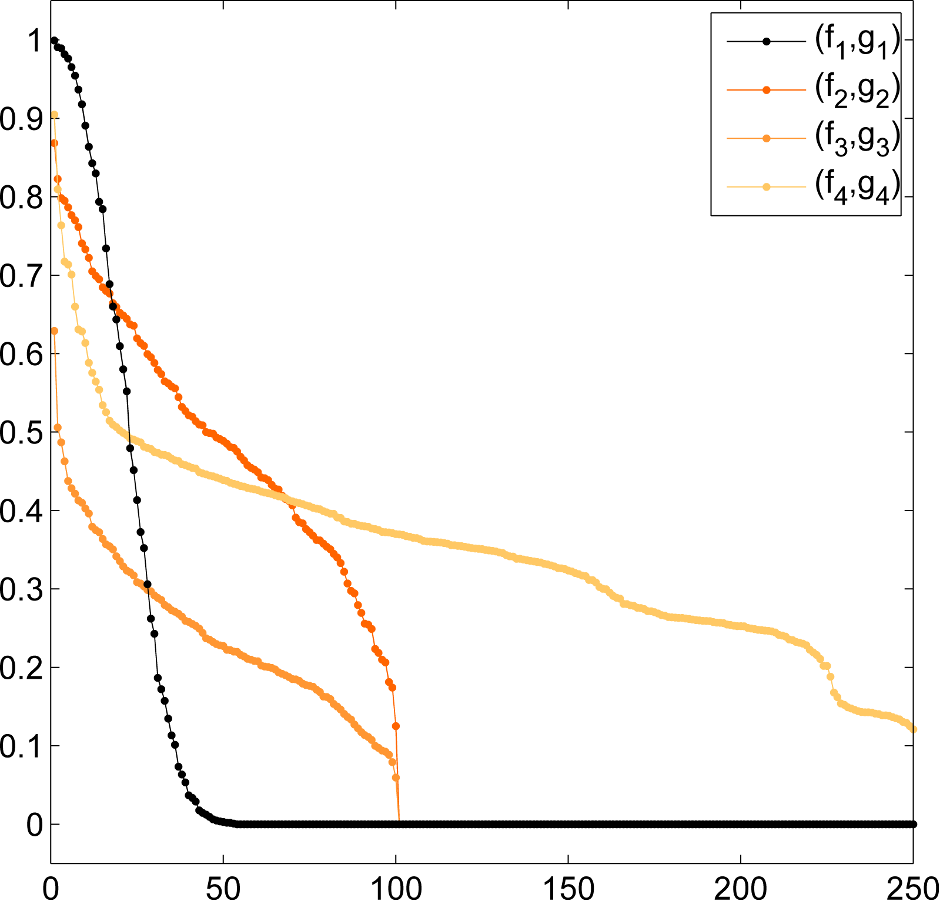}
	\includegraphics[width= 0.32\textwidth]{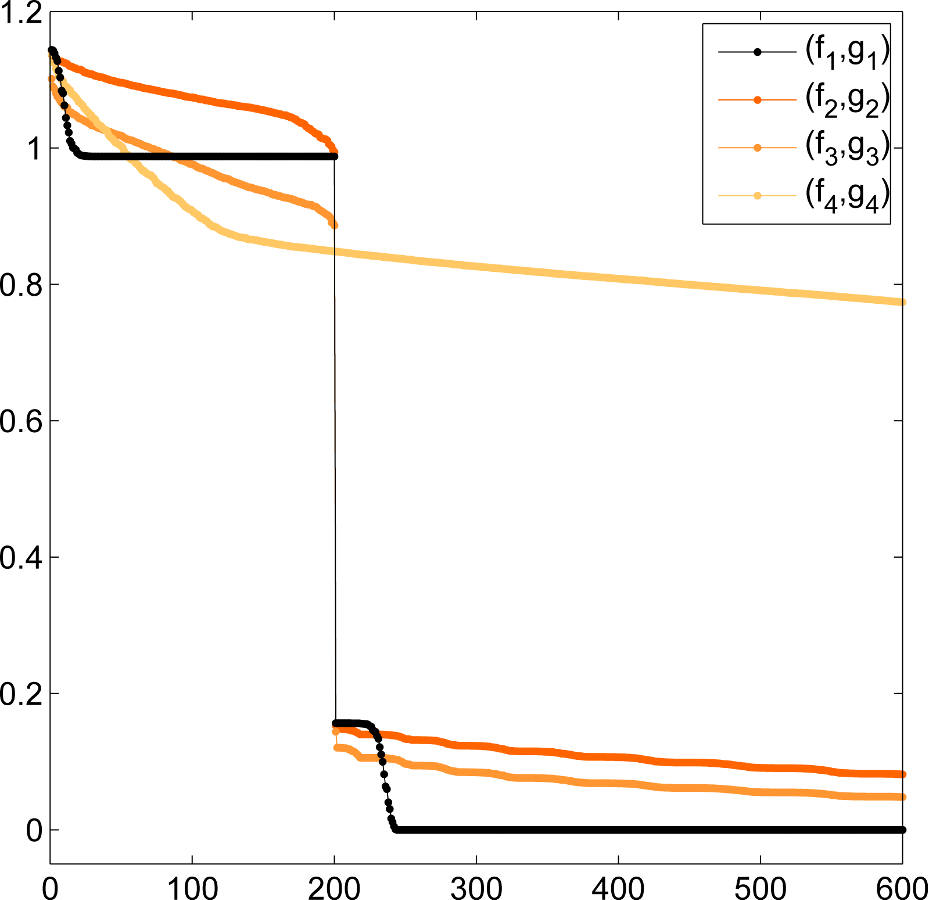}
	\caption{Left: Zoom of the upper right corner of Figure \ref{fig:shape1} (left). Middle: Decay of the eigenvalues $\sigma_k$ of $\Sop$ for the filter pairs $(f_1,g_1)$, $(f_2,g_2)$, $(f_3,g_3)$, $(f_4,g_4)$ on $G_1$. Right: Decay of the eigenvalues $\rho_k^{(\theta)}$ of $\Rop$ ($\theta = 9 \pi /20$) for the filter pairs $(f_1,g_1)$, $(f_2,g_2)$, $(f_3,g_3)$, $(f_4,g_4)$ on the graph $G_2$.}
	\label{fig:decaysigma}
\end{figure}

\subsubsection{Discussion of the shapes} 

From the shape of the uncertainty curves it is possible to extract qualitative information about the applied filter functions, and in case of $(f_4,g_4)$ also about the underlying graph $G$. All four filter pairs display an uncertainty, the projection filter pair $(f_1,g_1)$ giving the largest admissibility region $\mathcal{W}(\mathbf{M}_{f_1},\mathbf{C}_{g_1})$, or in other words, the weakest uncertainty relation. That $\mathcal{W}(\mathbf{M}_{f_1},\mathbf{C}_{g_1})$ describes in fact an uncertainty relation is only visible by a proper zoom, as displayed in Figure \ref{fig:decaysigma} (left). 

The parameter $\alpha>0$ of the modified distance filter $f_3$ has a visible impact on the shape of the uncertainty curve close to $(1,1)$. While decreasing the parameter $\alpha$ results in an uncertainty curve distant to the point $(1,1)$, increasing $\alpha$ has the opposite effect. The spectral filters $g_1$, $g_2$ and $g_3$ are all three bandlimiting filters. This is visible in the first three illustrations of Figure \ref{fig:shape1} and Figure \ref{fig:shape2} as the lower boundary of the numerical range intersects the axis $s = 0$. The fourth filter $g_4$ contains spectral information of the graph. In Figure \ref{fig:shape1} and \ref{fig:shape2} (right) we see that the operator $\mathbf{C}_{g_4}$ is invertible, and, thus that the largest eigenvalue of the graph Laplacian certainly satisfies $\lambda_n < 2$. 

\subsubsection{Discussion of the space-frequency localization of the eigenvectors of $\Sop$ and $\Rop$} 
We first have a look at the decay of the eigenvalues of $\Sop$ and $\Rop$ in Figure \ref{fig:decaysigma} (middle) and (left). The bandlimiting behavior of the spectral filters $g_1$, $g_2$ and $g_3$ is visible by the jumps of the eigenvalues at the bandwidth $N$, whereas for $g_4$ we see a smooth decay of the eigenvalues. For the projection-projection pair $(f_1,g_1)$ an earlier drop of the eigenvalues is visible in case of the operator $\Sop$ and a clustering at the values $0$, $\cos \theta$, $\sin \theta$ and $1$ in case of the operator $\Rop$. The distance filters $f_2$ and $f_3$ on the other hand provide smoothly decaying eigenvalues until the rapid drop at $N$. 

The bandlimiting property of the filters $g_1$, $g_2$ and $g_3$ is also visible in the space-frequency locations of the eigenvectors of $\Sop$ and $\Rop$ shown in Figure \ref{fig:shape1} and \ref{fig:shape2}. For these filters, we see a clear separation between 
bandlimited eigenvectors in the range and the eigenvectors spanning the kernel of $\Sop$ and $\Rop$, respectively. For the pairs $(f_1,g_1)$ and $(f_3,g_3)$ additional effects are visible as $f_1$ is a projection filter (enlarging the kernel of $\Sop$ and $\Rop$) and as $\hat{g}_3$ contains an additional smoothing factor $\hat{g}^{(\beta)}$. For the filter pair $(f_4,g_4)$ such a separation is not visible.

\subsection{Space localization of bandlimited signals for distance-projection filters} 

\begin{figure}[htbp]
	\centering
	\includegraphics[width= 1\textwidth]{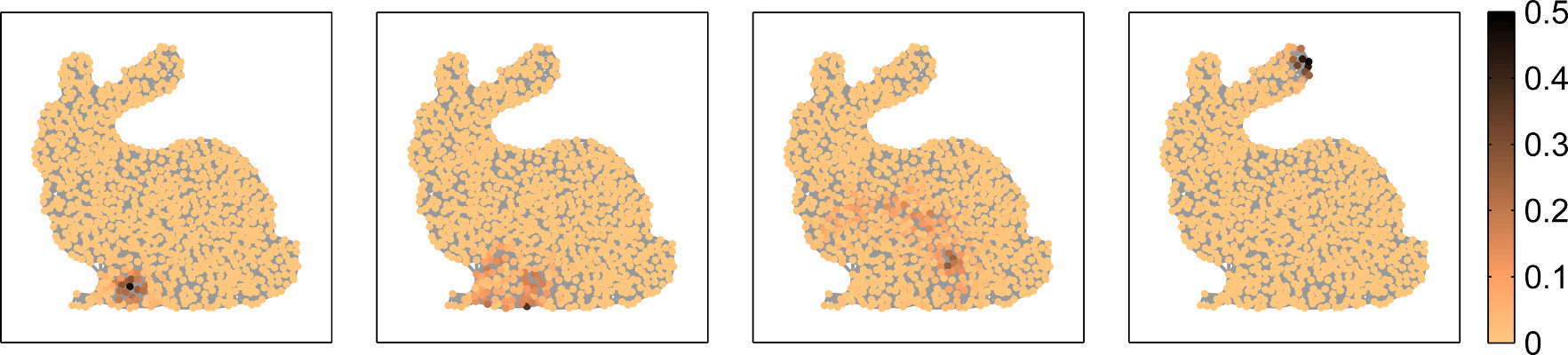} 
	\caption{The eigenvectors $\psi_1$ $\psi_{10}$, $\psi_{50}$ and $\psi_{200}$ of $\mathbf{S}_{f_2,g_2}$ on $G_2$ (from left to right).}
	\label{fig:ringeling}
\end{figure}

In case of the distance-projection pair $(f_2,g_2)$ further interesting effects are visible in the space-frequency behavior of the eigendecomposition of the operator $\Sop$. In the example given in Figure \ref{fig:shape1} (middle left), we observe that the frequency measure of an eigenvector $\psi_k$ is either $\meanc(\psi_k) = 1$ (i.e. the eigenvector $\psi_k$ is bandlimited) or $\meanc(\psi_k) = 0$ (if $k > N$, i.e. the support of $\hat{\psi}_k$ is outside of $\mathcal{B}$). We can further order the bandlimited eigenvectors of $\Sop$ with respect to their spatial localization $\meanm$. This corresponds to the natural ordering of the bandlimited eigenvectors $\psi_k$ with respect to the decreasing eigenvalues of $\Sop$. In particular, the optimally space-localized eigenvector with respect to the localization measure $\meanm$ inside the band $\mathcal{B}$ is $\psi_1$, the least space-localized is the eigenvector $\psi_N$. For the distance filter $f_2$ on the graph $G_2$, different bandlimited eigenvectors $\psi_k$ are illustrated in Figure \ref{fig:ringeling}. It gets visible that the eigenvectors $\psi_k$ are localized on $G_2$ in a ring with a certain graph distance to the center node $w$. This distance is linked to the index $k$.

\subsection{Space-frequency behavior of the optimally localized eigenvectors}

\begin{figure}[htbp]
	\centering
	\includegraphics[width= 1\textwidth]{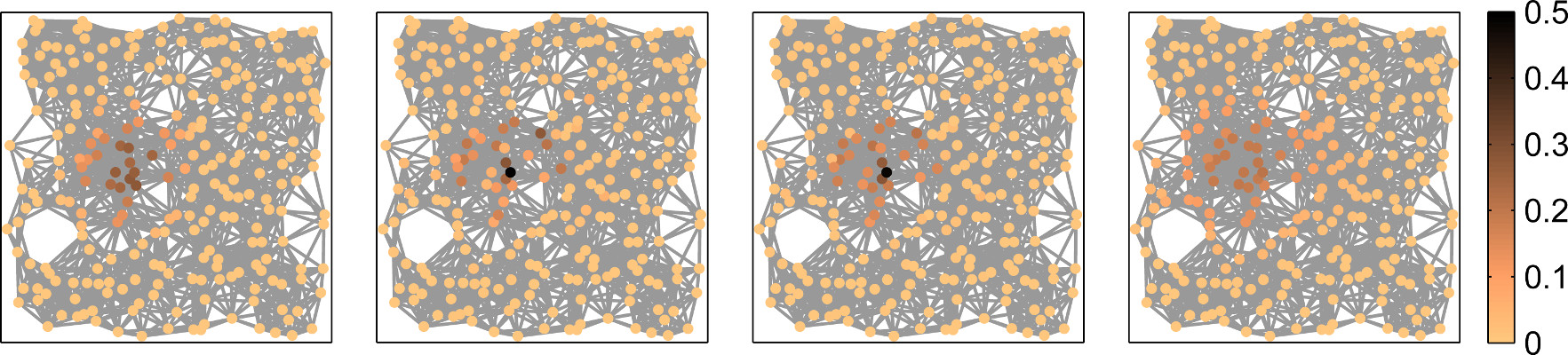} \\[1mm]
	\includegraphics[width= 1\textwidth]{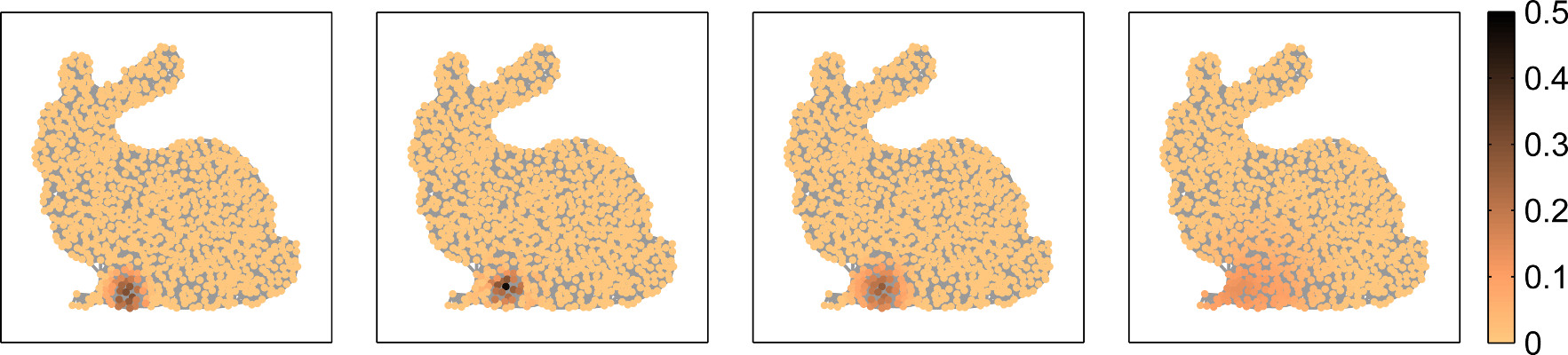}
	\caption{Top row: the eigenvectors $\psi_1$ of the operator $\Sop$ for the graph $G_1$ and the filter pairs $(f_1,g_1)$, $(f_2,g_2)$, $(f_3,g_3)$, and $(f_4,g_4)$ (from left to right). \newline
	Bottom row: the eigenvectors $\phi_1^{(\theta)}$ of the operator $\Rop$ with $\theta = \frac{9}{20} \pi$ for the graph $G_2$ and the filter pairs $(f_1,g_1)$, $(f_2,g_2)$, $(f_3,g_3)$, and $(f_4,g_4)$ (from left to right).}
	\label{fig:optimaleigspace}
\end{figure}

\begin{figure}[htbp]
	\centering
	\includegraphics[width= 1\textwidth]{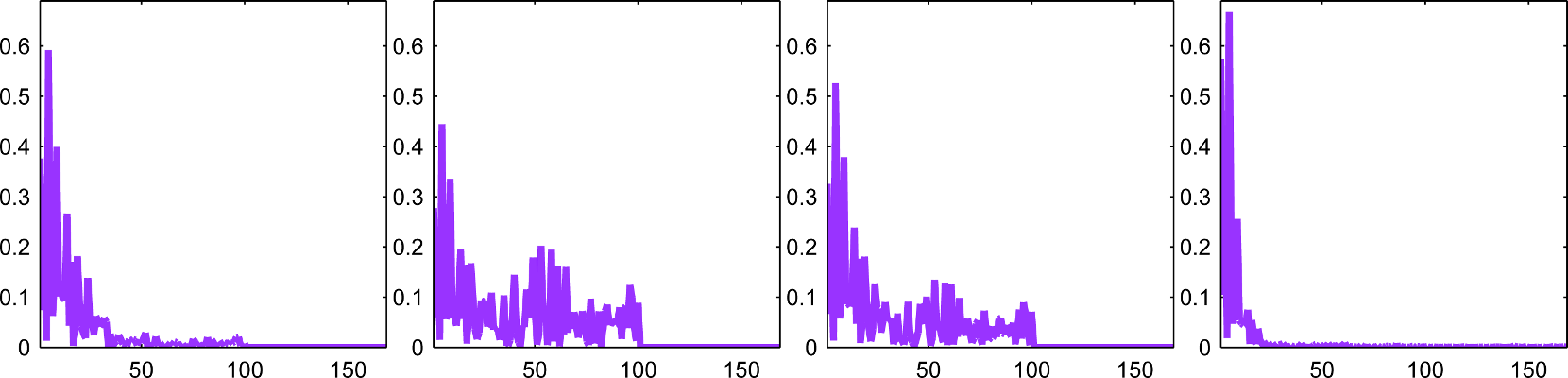} \\[1mm]
	\includegraphics[width= 1\textwidth]{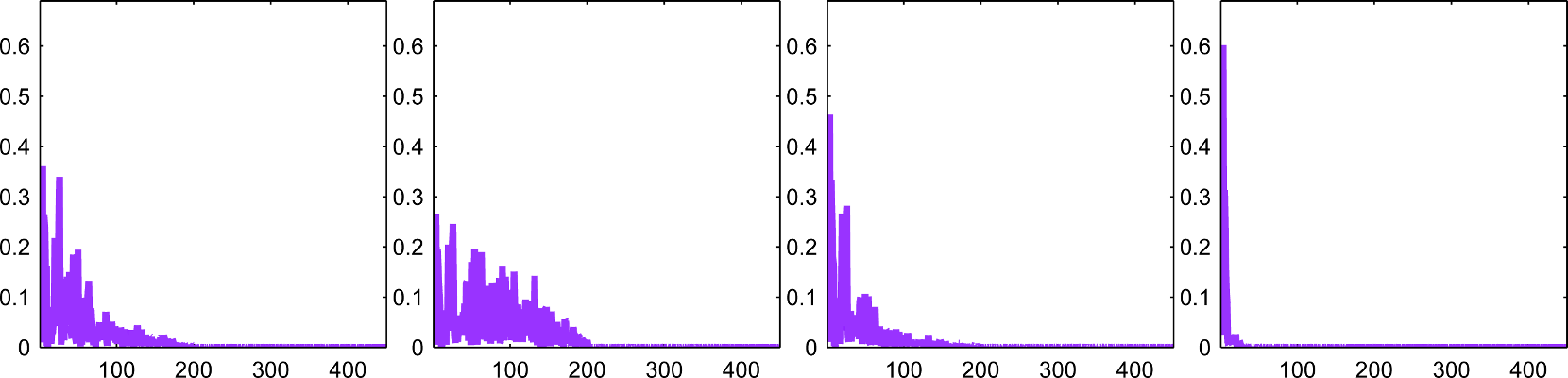}
	\caption{The Fourier coefficients of the eigenvectors displayed in Figure \ref{fig:optimaleigspace}. \newline 
	Top row: the absolute value of the Fourier coefficients $\hat{\psi}_1$ of the eigenvector ${\psi}_1$ for the graph $G_1$ and the filter pairs $(f_1,g_1)$, $(f_2,g_2)$, $(f_3,g_3)$, and $(f_4,g_4)$ (from left to right). \newline
	Bottom row: the absolute value of the Fourier coefficients $\hat{\phi}_1^{(\theta)}$ of the eigenvector $\phi_1^{(\theta)}$ ($\theta = \frac{9}{20} \pi$) for the graph $G_2$ and the filter pairs $(f_1,g_1)$, $(f_2,g_2)$, $(f_3,g_3)$, and $(f_4,g_4)$ (from left to right).}
	\label{fig:optimaleigfreq}
\end{figure}

Finally, we compare the space-frequency behavior of the optimally space-frequency localized eigenvectors $\psi_1$ and $\phi_1^{(\theta)}$ for the four filter pairs in Section \ref{sec:experimentfilters}. The spatial and spectral distributions of these localized eigenvectors are illustrated in Figure \ref{fig:optimaleigspace} and Figure \ref{fig:optimaleigfreq}, respectively.

Regarding the space localization, all four filter pairs provide eigenvectors $\psi_1$ and $\phi_1^{(\theta)}$ that are localized around the center node $w$ of the spatial filter. The kind of localization of the eigenvectors follows roughly the structure of the spatial filters given in Figure \ref{fig:2} and Figure \ref{fig:3}. In particular, whereas $f_1$ gives a set-oriented localization measure, the filters $f_2$, $f_3$ and $f_4$ are distance-oriented (with respect to the center $w$). The effects of the spectral filters on the eigenvectors get mainly visible in case of the filters $g_3$ and $g_4$. The decaying Fourier coefficients $\hat{g}_3$ and $\hat{g}_4$ have a blurring effect on the optimal eigenvectors, in particular in case of the pair $(f_4,g_4)$. 

In the spectral domain, we see that the bandlimiting filters $g_1 = g_2$ and $g_3$ are rather rough localization measures in the spectrum of the graph. In principle, they mainly push the optimal eigenvector to be in the given frequency band $\mathcal{B}$. On the other hand, the Laplace filter $g_4$ generates optimal eigenvectors with a much stronger frequency localization in the lower part of the spectral domain corresponding to the small eigenvalues of the graph Laplacian.

\section{Conclusion}
In this work, we presented a flexible framework for uncertainty relations in spectral graph theory that allows to characterize and compute uncertainty regions for a broad family of different space and frequency filters. In particular, the usage of a polygonal approximation method for the convex numerical range enabled us to  visualize the boundaries of the uncertainty regions very efficiently. This visualization technique and the related descriptions of uncertainty curves and space-frequency decompositions of signals make this framework into a promising tool to study and analyze new filter designs for a graph-adapted space-frequency analysis.

\section*{Acknowledgment}
The project was supported by the European Union's Horizon
2020 research and innovation programme ERA-PLANET, grant agreement no. 689443.


\begin{thebibliography}{1}
\scriptsize

\bibitem{AgaskarLu2013}
{\sc Agaskar, A. and Lu, Y.~M. } 
\newblock{ A spectral graph uncertainty principle.}
\newblock {\em IEEE Trans. Inform. Theory 59}, 7 (2013), 4338--4356.
  
\bibitem{Au-Yeung1975}
{\sc Au-Yeung, Y.~H. } 
\newblock{A simple proof of the convexity of the field of values defined by two
hermitian forms.}
\newblock {\em Aequations Math. 12} (1975), 82--83.

\bibitem{BenedettoKoprowski2015}
{\sc Benedetto, J.J., and Koprowski, P.J. } 
\newblock{Graph theoretic uncertainty principles.}
\newblock {\em 2015 International Conference on Sampling Theory and Applications (SampTA)} (2015), 357--361.
  
\bibitem{Brickman}
{\sc Brickman, L.} 
\newblock{ On the Field of Values of a Matrix.}
\newblock {\em Proc. Amer. Math. Soc. 12} (1961), 61--66.  
  
\bibitem{Chung}
{\sc Chung, F. R. K.}
\newblock {\em Spectral Graph Theory}.
\newblock {American Mathematical Society}, Providence, RI, 1997. 

\bibitem{Defferrard2016}
{\sc Defferrard, M., Bresson, X., and Vandergheynst, P.} 
\newblock{Convolutional Neural Networks on Graphs with Fast Localized Spectral Filtering.}
\newblock In {\em  Advances in neural information processing systems (NIPS'16)} (2016), 3844--3852. 

\bibitem{EladBruckstein2002}
{\sc Elad, M., and Bruckstein, A.M.}
\newblock A Generalized Uncertainty Principle and Sparse Representation in Pairs of Bases.
\newblock {\em IEEE Trans. Inform. Theory 48}, 9 (2002), 2558--2567. 
 
\bibitem{erb2012}
{\sc Erb, W.}
\newblock  Optimally space localized polynomials with applications in signal processing.
\newblock {\em J. Fourier Anal. Appl. 18}, 1 (2012), 45--66.
  
\bibitem{erb2013}
{\sc Erb, W.}
\newblock {An orthogonal polynomial analogue of the Landau-Pollak-Slepian
  time-frequency analysis}.
\newblock {\em J. Approx. Theory 166\/} (2013), 56--77.

\bibitem{ErbMathias2015}
{\sc Erb, W. and Mathias, S.}
\newblock {An alternative to Slepian functions on the unit sphere - A space-frequency analysis based on localized spherical polynomials.}
\newblock {\em Appl. Comput. Harmon. Anal. 38}, 2 (2015), 222--241.

\bibitem{FollandSitaram1997}
{\sc Folland, G.~B., and Sitaram, A.}
\newblock The uncertainty principle: a mathematical survey.
\newblock {\em J. Fourier Anal. Appl. 3}, 3 (1997), 207--233.

\bibitem{GustafsonRao}
{\sc Gustafson, K.E., and Rao, D.K.M}
\newblock {\em Numerical Range: The Field of Values
of Linear Operators and Matrices.}
\newblock {Springer}, New York, 1997.  

\bibitem{Hausdorff1919}
{\sc Hausdorff, F.}
\newblock {Der Wertvorrat einer Bilinearform.}
\newblock {\em Mathematische Zeitschrift 3} (1919), 314--316

\bibitem{HavinJoericke}
{\sc Havin, V., and J{\"o}ricke, B.}
\newblock {\em The Uncertainty Principle in Harmonic Analysis}.
\newblock Springer-Verlag, Berlin, 1994.

\bibitem{Heisenberg1927}
{\sc Heisenberg, W.}
\newblock Über den anschaulichen {I}nhalt der quantentheoretischen {K}inematik
  und {M}echanik.
\newblock {\em Z. f. Physik 43\/} (1927), 172--198.

\bibitem{HornJohnson1991}
{\sc Horn, R.~A., and Johnson, C~.R.}
\newblock {\em Topics in Matrix Analysis},
\newblock Cambridge University Press, 1991.

\bibitem{Johnson1978}
{\sc Johnson, C~.R.}
\newblock { Numerical Determination of the Field of Values of a General Complex Matrix.}
\newblock {\em SIAM J. Num. Anal. 15\/}, 3 (1978), 595--602.

\bibitem{Klaja2016}
{\sc Klaja, H.}
\newblock {On Erb's uncertainty principle.}
\newblock {\em Studia Mathematica 232\/}, 1 (2016), 7--17.
 
\bibitem{LandauPollak1961}
{\sc Landau, H., and Pollak, H.}
\newblock {Prolate spheroidal wave functions, Fourier analysis and uncertainty,
  II.}
\newblock {\em Bell System Tech. J. 40\/} (1961), 65--84.

\bibitem{LandauPollak1962}
{\sc Landau, H., and Pollak, H.}
\newblock {Prolate spheroidal wave functions, Fourier analysis and uncertainty,
  III.}
\newblock {\em Bell System Tech. J. 41\/} (1962), 1295--1336.

\bibitem{LandauWidom1980}
{\sc Landau, H., and Widom, H.}
\newblock {Eigenvalue distribution of time and frequency limiting.}
\newblock {\em J. Math. Anal. Appl. 77\/} (1980), 469--481.

\bibitem{Lenard1972}
{\sc Lenard, A.}
\newblock {The Numerical Range of a Pair of Projections.}
\newblock {\em J. Funct. Anal. 10\/} (1972), 410--423.

\bibitem{Papoulis}
{\sc Papoulis, A.}
\newblock {\em Probability, Random Variables, and Stochastic Processes},
  third~ed.
\newblock McGraw-Hill, New York, 1991.

\bibitem{pasdeloup2015}
{\sc Pasdeloup, B., Alami, R., Gripon, V., and Rabbat, M.~G.}
\newblock {Toward an uncertainty principle for weighted graphs.} 
\newblock In {\em IEEE 23rd European Signal Processing Conference}, (2015).

\bibitem{pasdeloup2019uncertainty}
{\sc Pasdeloup, B., Gripon, V., Alami, R., and Rabbat, M.~G.}
\newblock {Uncertainty principle on graphs.} 
\newblock in {\em Vertex-Frequency Analysis of Graph Signals}, Springer, (2019), 317--340.

\bibitem{perraudin2018}
{\sc Perraudin, N., Ricaud, B., Shuman, D.~I., and Vandergheynst, P.}
\newblock {Global and local
  uncertainty principles for signals on graphs.} 
\newblock {\em APSIPA Transactions on Signal and Information Processing 7} (2018).

\bibitem{PlattnerSimons2014}
{\sc Plattner, A., and Simons, F.~J.}
\newblock Spatiospectral concentration of vector fields on a sphere.
\newblock {\em Appl. Comput. Harm. Anal. 36}, 1 (2014), 1 --
  22.
  
\bibitem{Rote1992}
{\sc Rote, G.}
\newblock {The convergence rate of the sandwich algorithm for approximating convex functions.}
\newblock {\em Computing 48}, 3-4 (1992), 337--361.

\bibitem{SimonsDahlenWieczorek2006}
{\sc Simons, F.~J., Dahlen, F., and Wieczorek, M.~A.}
\newblock {Spatiospectral concentration on a sphere.}
\newblock {\em SIAM Rev. 48}, 3 (2006), 504--536.

\bibitem{shuman2012}
{\sc Shuman, D.~I., Ricaud, B., and Vandergheynst, P.}
\newblock {A windowed graph {F}ourier transform.}
\newblock in {\em Proc. 2012 IEEE Statistical Signal Processing Workshop
  (SSP)}, (2012), 133--136.

\bibitem{shuman2016}
{\sc Shuman, D.~I., Ricaud, B., and Vandergheynst, P.}
\newblock {Vertex-frequency analysis on graphs.} 
\newblock {\em Appl. Comput. Harm. Anal. 40}, 2 (2016), 260--291.

\bibitem{Slepian1978}
{\sc Slepian, D.}
\newblock {Prolate spheroidal wave functions, Fourier analysis, and
  uncertainty, V: The discrete case.}
\newblock {\em Bell System Tech. J. 57\/} (1978), 1371--1430.

\bibitem{Slepian1983}
{\sc Slepian, D.}
\newblock {Some comments on Fourier analysis, uncertainty and modeling.}
\newblock {\em SIAM Rev. 25\/} (1983), 379--393.

\bibitem{SlepianPollak1961}
{\sc Slepian, D., and Pollak, H.~O.}
\newblock {Prolate spheroidal wave functions, Fourier analysis and uncertainty,
  I.}
\newblock {\em Bell System Tech. J. 40\/} (1961), 43--63.

\bibitem{StankovicDakovicSejdic2019}
{\sc Stankovi{\'c}, L., Dakovi{\'c}, L., and Sejdi{\'c}, E.}
\newblock {Vertex-Frequency Energy Distributions.} 
\newblock In {\em Vertex-Frequency Analysis of Graph Signals}, Springer, (2019), 377--415.

\bibitem{TBL2016}
{\sc {Tsitsvero}, M., {Barbarossa}, S., and {Di Lorenzo}, P. }
\newblock {Signals on Graphs: Uncertainty Principle and Sampling.} 
\newblock{\em IEEE Trans. Sign. Proc. 64}, 18 (2016), 4845--4860.

\bibitem{Toeplitz1918}
{\sc Toeplitz, O.}
\newblock {Das algebraische Analogon zu einem Satze von Fej{\'e}r.}
\newblock {\em Mathematische Zeitschrift 2} (1918), 187--197.

\bibitem{Tran2018}
{\sc Tran, D. V., Navarin, N., and Sperduti, A.} 
\newblock{On filter size in graph convolutional networks}
\newblock In {\em 2018 IEEE Symposium on Deep Learning (SSCI)}, Bangolore, India (2018).

\bibitem{VanDeVille2016}
{\sc {Van De Ville}, D.} 
\newblock{When Slepian Meets Fiedler: Putting a Focus on the Graph Spectrum.}
\newblock {\em IEEE Signal Processing Letter 24}, 7 (2017), 1001--1004.

\end{thebibliography}
\end{document}